\documentclass[conference]{IEEEtran}

\usepackage{amsmath}    
\usepackage{graphics,epsfig,subfigure}    
\usepackage{verbatim}   
\usepackage{color}      
\usepackage{subfigure}  
\usepackage{amssymb}
\usepackage{wasysym}
\usepackage{epstopdf}
\usepackage{graphicx}
\DeclareGraphicsRule{.jpg}{eps}{.bb}{}
\usepackage{mathrsfs}


\newcommand{\expect}[1]{\mathbb{E}\big\{#1\big\}}

\newcommand{\script}[1]{{{\cal{#1} }}}


\newtheorem{coro}{\textbf{Corollary}}
\newtheorem{lemma}{\textbf{Lemma}}

\newtheorem{theorem}{\textbf{Theorem}}

\allowdisplaybreaks
\abovedisplayskip=.02in \belowdisplayskip=.02in

\begin{document}

\title{Optimizing Your Online-Advertisement Asynchronously}
\author{\large{Longbo Huang} \\
 longbohuang@tsinghua.edu.cn\\
 IIIS, 
 Tsinghua University
 \thanks{Longbo Huang  (http://www.iiis.tsinghua.edu.cn/$\sim$huang) is with the Institute for Theoretical Computer Science and the Institute for Interdisciplinary Information Sciences, Tsinghua University, Beijing, P. R. China. 
} 
\thanks{This work was supported in part by the National Basic Research Program of China Grant 2011CBA00300, 2011CBA00301, the National Natural Science Foundation of China Grant 61033001, 61061130540, 61073174. }
}

\maketitle
\thispagestyle{empty}
\pagestyle{empty}

\begin{abstract}
We consider the problem of designing optimal online-ad investment strategies for a single advertiser,  who invests at multiple sponsored search sites simultaneously, with the objective of maximizing his average revenue subject to the advertising budget constraint.  A greedy online investment scheme is developed to achieve an average revenue that can be pushed to within $O(\epsilon)$ of the optimal, for any $\epsilon>0$, with a tradeoff that the temporal budget violation is $O(1/\epsilon)$. Different from many existing algorithms, our scheme allows the advertiser to \emph{asynchronously} update his investments on each search engine site, hence applies to systems where the timescales of action update intervals are heterogeneous for different sites. We also  quantify the impact of inaccurate estimation of the system dynamics and show that the algorithm is robust against imperfect system knowledge. 
\end{abstract}

\section{Introduction} 
Over the past years, online advertisement has  grown significantly and has become one of the most important advertising approaches. According to Emarketer analysts  \cite{emarketer-2012}, online advertising spending will reach $46.5$ billion dollars in $2013$ and is expected to surpass $60$ billion dollars in $2016$. Among the many online advertising methods, sponsored search has been a very successful mechanism and has served as a major income source for many search engines such as Google and Yahoo!. 

%

Due to the popularity of sponsored search, designing efficient algorithms for it has received much attention, and there has been a large body of previous work in this area. These works can roughly be divided into two categories. The first category of work aims at designing optimal algorithms for maximizing the search engines' revenue. \cite{adword-online-2007} proposes an online ad allocation algorithm based on online matching \cite{karp-online-bm-1990}, and proves that it achieves an $1-1/e$ competitive ratio. \cite{niv-ad-2007} uses a primal-dual approach and designs a scheme that achieves the same competitive ratio. 
\cite{ranking-auction-2007} looks at the problem of finding revenue-optimal rules for ranking ads. \cite{online-ad-opt-2011} formulates the problem as a stochastic optimization problem and proposes an algorithm to achieve near-optimal performance. \cite{ad-budget-2013} designs algorithms for associating queries with budget constrained advertisers and conducts system implementations. 
The second category focuses on designing optimal bidding strategies for the advertisers. \cite{bb-bidding-2007} studies  properties of greedy bidding schemes and develops a balanced bidding algorithm. \cite{amin-2012} considers a single advertiser trying to maximize the number of clicks given a certain budget and tackles the problem with Markov decision process (MDP).  \cite{opt-bidding-2012} formulates the optimal bidding problem as an MDP with large number of bidders. \cite{online-ad-bidding-2013} considers a single advertiser and proposes a greedy bidding scheme based on stochastic approximation. 
However, we note that the aforementioned works only consider optimizing the performance of the advertiser or the search engine in the ad auction process. Hence, they focus mainly on designing algorithms for systems with a single search engine site. In practice, however, an advertiser can typically utilize multiple sites simultaneously for advertising. In order to maximize his revenue, one must jointly optimize the investments at all sites. 

In this paper, we consider a single advertiser who is trying to advertise his product by utilizing a set of  sponsored search engines (called sites below), e.g., Google or Yahoo!. At every time, the advertiser decides the advertising configuration at each site, e.g., total budget and maximum pay-per-click, associating ads to different keywords, and the bidding strategies, and decides how much money to allocate to advertising at each search engine. According to the advertiser's customized setting and the sites' ad allocation and billing mechanisms, the amount of money at a search engine site will be consumed at a certain rate, and an amount of revenue will be generated to the advertiser. After the money is completely depleted at a site, the advertiser updates his investment at the site and resets his advertising  configurations, based on his investment strategy and his observation of the site's performance. 
%
The objective of the advertiser is to find an investment policy to maximize his time average revenue subject to his advertising budget. 

This problem captures many features of the online advertising procedure at popular sponsored search engine sites such as Google's AdWords \cite{google-adword} and Bing ads \cite{bing-ad}. 
However,  solving this problem is challenging. First of all, the advertiser has to learn over time about how to allocate his investment across different search engine sites subject to his budget constraint. With the many different  features and ad mechanisms of the different sites and the many options of the advertiser, this is a nontrivial task.  Second, due to the intrinsic stochastic nature of the online advertising process, the advertising performance as well as the time to re-update the advertising configurations at each site are highly dynamic and can be very different from site to site. This imposes the constraint that the investment strategy must be able to handle \emph{asynchronous} system operations. However, most existing optimal control algorithms, e.g., \cite{neelynowbook} \cite{huang-lifo-ton},  require simultaneous update of the control actions and hence do not apply in this case. Finally, optimizing  time average metrics defined over time-varying update intervals  typically involves optimizing sum of ratios of functions. This is in general non-convex and problems of this kind are generally very hard to solve. 

In order to tackle this problem and to resolve the aforementioned difficulties, we adopt the recently developed Lyapunov technique for renewal systems and asynchronous control \cite{neelymcbook}  \cite{neely-asyn}. This approach has three important components: (1) a virtual deficit queue with carefully defined arrival and service rates as ratios of the corresponding system metrics, (2) a Lyapunov drift defined over variable size time intervals, and (3) a drift-plus-penalty-ratio principle in decision making. 
These three components enable the development of a simple yet near-optimal asynchronous control scheme called \textsf{Ad-Investment (AI)}, which allows asynchronous updates of the investments and the advertising configurations. We show that the \textsf{AI} algorithm achieves an average revenue that can be pushed to within $O(\epsilon)$ of the maximum, for any $\epsilon>0$, with a tradeoff of an $O(1/\epsilon)$ temporal budget violation. We further investigate the performance of \textsf{AI} with imperfect system knowledge  and quantify its impact on the algorithm performance. 

This paper is organized as follows: In Section \ref{section:model}, we state our model and problem formulation. In Section \ref{section:revenue}, we derive a characterization of the maximum revenue. Then, we present the dynamic \textsf{Ad-Investment} strategy  (\textsf{AI}) for the advertiser in Section \ref{section:ai}. We investigate the performance loss of \textsf{AI} due to imperfect knowledge of the system dynamics in Section \ref{section:approx}. Simulation results are presented in Section \ref{section:sim}. We conclude the paper in Section \ref{section:conclusion}. 

\vspace{-.02in}
\section{System Model}\label{section:model}
In this section, we present our system model. We consider a single advertiser who is trying to advertise his products at $N$ sponsored search websites (called sites below), e.g., Google or Bing, denoted by $\script{S}=\{S_1, ..., S_N\}$ (see Fig. \ref{fig:system}). We assume that the system operates in continuous time, i.e., $t\in\mathbb{R}_+$. 
\begin{figure}[cht]
\centering
\vspace{-.08in}
\includegraphics[height=1in, width=2.8in]{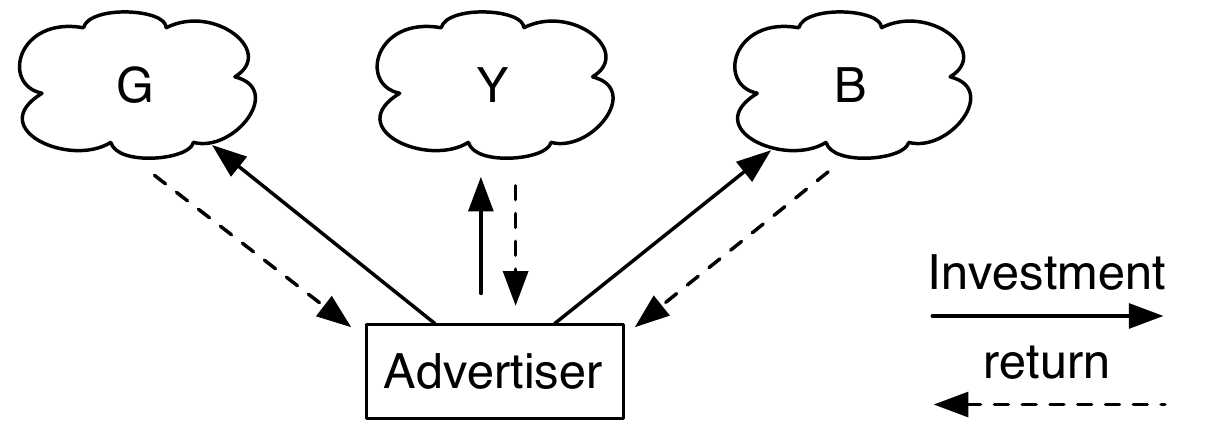}
\vspace{-.05in}
\caption{A single advertiser allocates his advertising budget across multiple search engine sites (``G,'' ``Y'' and ``B'' in this example). The objective of the advertiser is to achieve maximum revenue subject to his advertising budget.}\label{fig:system}
\vspace{-.05in}
\end{figure}

\vspace{-.11in}
\subsection{Advertiser Actions}
The advertiser first decides how much money to invest at each site and specifies the operation configuration, e.g., the maximum pay-per-click, or which ads to play for the keywords. Then, when the money deposited at a site is depleted, the advertiser  updates his investment amount and the operation configuration for that  search site, and the advertising process continues. \footnote{Note that our approach can also handle the  more general case where the advertiser uses other criteria for deciding when to update the investment amounts and the advertising configurations. } 

%
Note that this operating mode is quite common in today's sponsored search engines, e.g., Google Adwords \cite{google-adword}.  Under the advertiser's investment policy, each site goes through a sequence of asynchronous \emph{advertising} (and \emph{freeze}) intervals, defined as  the intervals during which the advertiser has nonzero (zero) budget left at that site (Fig. \ref{fig:service-idle-frame} shows an example). In the following, we will conveniently call an advertising interval plus the subsequent freeze interval a \emph{frame}. 
\begin{figure}[cht]
\centering
\vspace{-.06in}
\includegraphics[height=1.8in, width=3.3in]{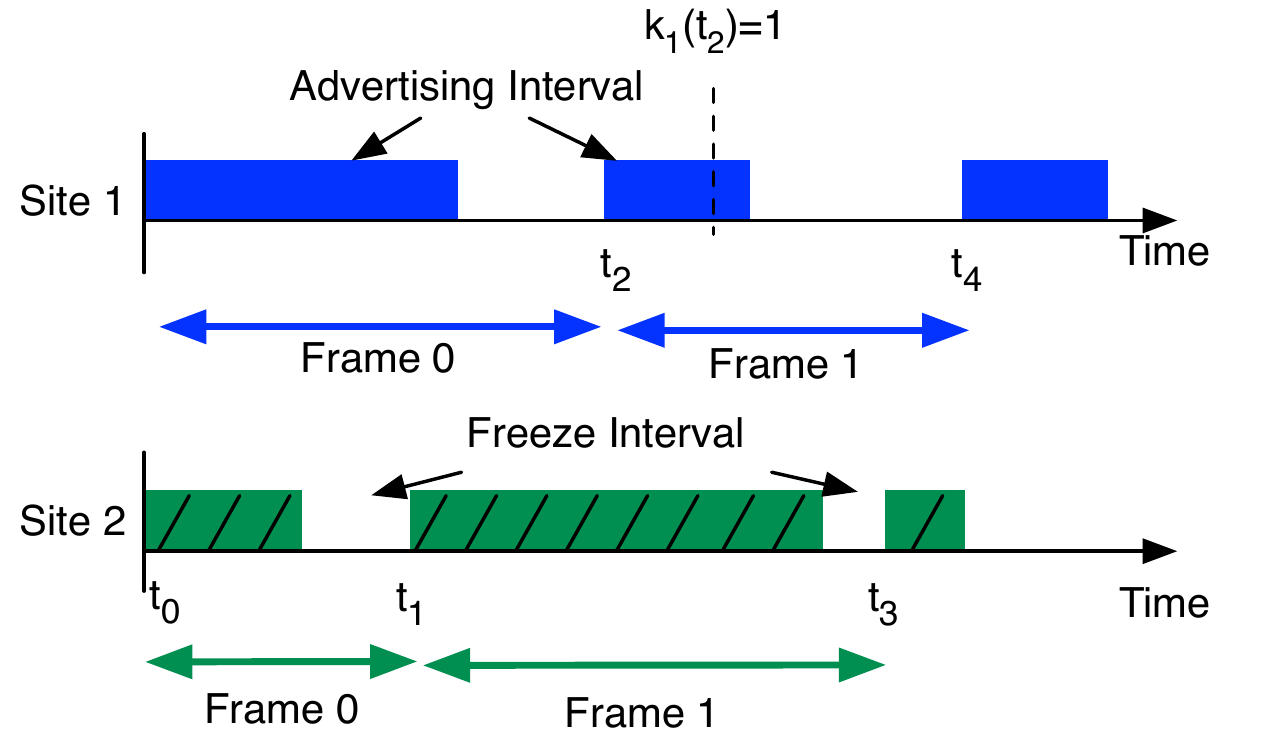}
\vspace{-.06in}
\caption{An example of the advertising and freeze intervals for a 2-site system. The advertiser updates the configurations of the two sites asynchronously at times $t_0$, $t_1$, $t_2$, $t_3$ and $t_4$. }\label{fig:service-idle-frame}
\vspace{-.05in}
\end{figure}

We use $p_n[k]$ to denote the amount of new investment the advertiser makes to search site $S_n$ for the $k^{\text{th}}$ advertising interval.  Note that this new  investment will start a new advertising interval. To allow the advertiser to temporarily suspend his investment at a site, we  assume that the advertiser can always adopt a ``freeze'' option. Under this freeze option, the advertiser will not make further new investment into the search site for some time $T^{\text{fr}}_{n}[k]$ after the advertising interval. We call this period the freeze interval. 
%
We assume that for all $k$, the pair $(p_n[k], T^{\text{fr}}_{n}[k])$ is chosen from the feasible (investment, freeze time) set $\script{P}_n$ for all $n$. 
We assume that $\script{P}_n$ includes the option $p_n[k]=0$, and the constraints $0\leq p_n[k]\leq p_n^{\max}$  and $0\leq T^{\text{fr}}_{n}[k]\leq T^{\text{fr},\max}_{n}$. 
One such example can be $\script{P}_n=[0, p]\times[0, T]$.  
We also  use $m_n[k]$ to denote the operation configuration, for instance, the maximum pay-per-click together with the maximum per-day payment, or the targeted user groups, or what ads to display for the corresponding keywords, or the bidding strategy to use at that site for that frame. We assume that $m_n[k]\in\script{M}_n$, where $\script{M}_n$  is the set of feasible operation configurations at site $S_n$. We assume that both $\script{P}_n$ and $\script{M}_n$ are compact sets. 
 
Given the investment amount and the operating configuration, the corresponding search sites will use their own pricing and ad allocation mechanisms to allocate advertisement opportunities to the advertisers, e.g., the generalized second price auction \cite{gsp2007} or the VCG mechanism \cite{agt-book}, and to generate revenue for the advertiser, e.g., by bringing visiting traffic or by directly bringing product orders to the advertiser, until the advertiser's new investment is fully depleted. However, due to the dynamic nature of the advertising process,  the outcome at each site will inevitably be time-varying, e.g., the volume of the visiting traffic to the search site and the set of keywords that are queried. Such a dynamic condition will affect the site's advertising performance,  resulting in different amounts of generated profit to the advertiser and  time-varying speeds in depleting the remaining advertisement deposit at the site. 
To capture this dynamics, we denote the duration taken to deplete the $k^{\text{th}}$ investment at site $S_n$ by $T^{\text{ad}}_{n}[k]$, and assume that $T^{\text{ad}}_{n}[k]$ is an independent random variable conditioning on the advertiser's investment $p_n[k]$ and the operating configuration $m_n[k]$. We further assume that there exists a function $F_{n}[k]\triangleq F_{n}(p_n[k], m_n[k])$, which maps $(p_n[k], m_n[k])$ to the expected length of $T^{\text{ad}}_{n}[k]$, i.e., 
\begin{eqnarray} 
F_{n}: (p_n[k], m_n[k]) \mapsto \expect{T^{\text{ad}}_{n}[k]}. \label{eq:ad-time-def}
\end{eqnarray} 
The function $F_{n}(\cdot, \cdot)$ is assumed to be known to the advertiser. We assume that under any feasible actions, the frame size at each site $S_n$ is both upper and lower bounded, i.e., \footnote{Note that this is a reasonable assumption. If the advertiser invests nonzero money (typically must be more than a minimum), it will take some nonzero time to deplete the money. Otherwise if the advertiser suspends his investment at the site, then he will specify a nonzero freeze time and resume later. }
\begin{eqnarray}
0<T_{n}^{\min}\leq  T^{\text{ad}}_{n}[k] + T^{\text{fr}}_{n}[k]\leq T_{n}^{\max}. \label{eq:time-max}
\end{eqnarray}

We denote $R_n[k]$ the total revenue the advertiser earns during advertising interval $k$ from site $S_n$.  Similar to $T^{\text{ad}}_{n}[k]$, $R_n[k]$ is also a random variable which is conditionally independent of others given $p_n[k]$ and $m_n[k]$. We also assume that there exists a function $G_{n}[k]\triangleq G_{n}(p_n[k], m_n[k])$ that: 
\begin{eqnarray}
G_{n}: (p_n[k], m_n[k]) \mapsto \expect{R_n[k]}. \label{eq:ad-revenue-def}
\end{eqnarray} 
The function $G_n$ satisfies $G_n(0, \cdot)=0$, $G_{n}(\cdot, \cdot)\geq0$, $ G_n(p, m)\leq \nu p$ for some constant $\nu$, and $G_n(p, m)<G^{\max}$ for all $n$ and all $m, p$. We also assume that it is known to the advertiser. 

In the following of the paper, for notational convenience, we define common upper and lower bounds by dropping the index ``$n$,'' i.e., $T^{\max}=\max_nT_n^{\max}$, $T^{\min}=\min_nT_n^{\min}$,  $R^{\max}=\max_nR_n^{\max}$ and $p^{\max}=\max_np_n^{\max}$. 

\subsection{Budget Constraint and Objective} 
It can be seen from the above that the advertiser's time average revenue is given by:  \footnote{Throughout this paper, we assume that all limits and corresponding time average values exist with probability $1$. } 
\begin{eqnarray}
\text{Profit}_{\text{av}}\triangleq\sum_{n}\frac{\lim_{K\rightarrow\infty}\frac{1}{K}\sum_{k=0}^{K-1}R_n[k]  }{ \lim_{K\rightarrow\infty}\frac{1}{K}\sum_{k=0}^{K-1}(T^{\text{ad}}_n[k] + T^{\text{fr}}_n[k]  ) }.  \label{eq:revenue}
\end{eqnarray}
%
In practice, advertisers typically also have limited advertising budgets. We model this by requiring that the  time average advertising expenditure is no more than some pre-specified budget value $B_{\text{av}}$, i.e., 
\begin{eqnarray}
\sum_{n}\frac{\lim_{K\rightarrow\infty}\frac{1}{K}\sum_{k=0}^{K-1}p_n[k]  }{ \lim_{K\rightarrow\infty}\frac{1}{K}\sum_{k=0}^{K-1}(T^{\text{ad}}_n[k] + T^{\text{fr}}_n[k]  ) } \leq B_{\text{av}}. \label{eq:budget-cond}
\end{eqnarray}
Here the term on the left-hand-side (LHS) is the time average advertising expenditure of the advertiser. 

The advertiser's objective is to find an investment strategy that determines how to make the ad-investments, i.e., choosing $p_n[k]$, $T^{\text{fr}}_n[k]$ and $m_n[k]$, so as to maximize his time average revenue (\ref{eq:revenue}) subject to the average advertisement budget constraint  (\ref{eq:budget-cond}).  
Below, we call a strategy that chooses $(p_n[k], T^{\text{fr}}_n[k])\in\script{P}_n$ and $m_n[k]\in\script{M}_n$, and satisfies the budget constraint (\ref{eq:budget-cond}) a \emph{feasible} strategy. Then, we use $\text{Profit}_{\text{av}}^*$ to denote the maximum time average budget-constrained profit achievable over all feasible strategies. 

\vspace{-.03in}
\subsection{Discussion of the Model}
Our model is general and does not assume specific structures of the system.  The operation configuration $m_n[k]$ can be used to describe the specific bidding policy used during the advertising interval, i.e., automatic bidding or customized bidding scheme, e.g., \cite{bb-bidding-2007}. In this case, the functions $G_n(\cdot, \cdot)$ and $F_n(\cdot, \cdot)$ represent the expected revenue and interval length under such bidding schemes.  
Therefore, the strategies developed in this paper can indeed be implemented with the existing bidding schemes. 
Under such general settings, solving this problem is challenging, since the performance metrics in (\ref{eq:revenue}) and (\ref{eq:budget-cond}) are equivalent to sum of ratios of functions and are generally non-convex. Moreover, different search sites may need updates asynchronously, whereas most existing optimal control algorithms \cite{neelynowbook} \cite{huang-lifo-ton} require synchronous  action updates at all components of the system. 

\section{Characterizing Optimal Revenue}\label{section:revenue}
In this section, we first present a theorem which characterizes the advertiser's maximum  revenue. The main idea of the theorem is that the maximum budget-constrained revenue can in principle be computed by solving a complex nonlinear optimization problem (typically nonconvex). 
The theorem will later be used for our analysis. 
In the theorem, the parameter $V\geq1$ is a constant introduced for our later analysis. 
\begin{theorem}\label{theorem:profit}
The maximum time average revenue $\text{Profit}_{\text{av}}^*$ satisfies $V\text{Profit}^*_{\text{av}}\leq\Phi^*$, where $\Phi^*$ is the optimal value of the following optimization problem. 
\begin{eqnarray}
\hspace{-.4in}&&\max: \,\, \Phi\triangleq\sum_{n}\frac{V \sum_{h=0}^{\infty}\alpha_{n}^{h}G_{n}(p^{h}_n, m^{h}_n) }{  \sum_{h=0}^{\infty}\alpha_{n}^{h}F_{n}(p^{h}_n, m^{h}_n) + \sum_{h=0}^{\infty}\alpha_{n}^{h}T^{\text{fr}, h}_n} \label{eq:opt-stat}\\
\hspace{-.4in}&&\,\,\,\,\text{s.t.}\,\,\,\,\sum_{n}\frac{ \sum_{h=0}^{\infty}\alpha_{n}^{h}p^{h}_n}{ \sum_{h=0}^{\infty}\alpha_{n}^{h}F_{n}(p^{h}_n, m^{h}_n) +  \sum_{h=0}^{\infty}\alpha_{n}^{h}T^{\text{fr}, h}_n }\leq B_{\text{av}}\nonumber\\
\hspace{-.4in}&&\qquad\,\,\,\, \,\,\sum_{h=0}^{\infty}\alpha_{n}^{h}=1, \alpha_{n}^h\geq0,\forall \, n\nonumber\\
\hspace{-.4in}&&\qquad\,\,\,\, \,\,(p^{h}_n, T^{\text{fr},h}_n)\in\script{P}_n, m^{h}_n\in\script{M}_n, \,\,\forall\,n, h. \quad\Box\nonumber
\end{eqnarray}
\end{theorem}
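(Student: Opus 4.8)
The plan is to show that the limiting time-average behavior of \emph{any} feasible strategy can be represented as a feasible point of the optimization problem (\ref{eq:opt-stat}) whose objective value equals exactly $V$ times the time-average revenue achieved by that strategy. Since $\Phi^*$ is the maximum of $\Phi$ over all such feasible points, taking the supremum over feasible strategies then gives $V\,\text{Profit}_{\text{av}}^*\leq\Phi^*$. Because the objective and both the numerator and denominator of each per-site ratio decouple across the sites $n$, it suffices to analyze each site $S_n$ separately and then sum.

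First I would fix an arbitrary feasible strategy and, for each site $S_n$, introduce the limiting per-frame averages $\overline R_n$, $\overline T_n$, and $\overline p_n$ of $R_n[k]$, of $T^{\text{ad}}_n[k]+T^{\text{fr}}_n[k]$, and of $p_n[k]$, which exist by assumption. The key reduction is to replace these realized random quantities by their conditional means. Since $R_n[k]$ and $T^{\text{ad}}_n[k]$ are conditionally independent of the past given $(p_n[k],m_n[k])$ with means $G_n(p_n[k],m_n[k])$ and $F_n(p_n[k],m_n[k])$, the sequences $R_n[k]-G_n(p_n[k],m_n[k])$ and $T^{\text{ad}}_n[k]-F_n(p_n[k],m_n[k])$ are bounded martingale-difference sequences with respect to the natural filtration, the boundedness following from $G_n<G^{\max}$, $R_n\leq R^{\max}$, and the frame-length bounds (\ref{eq:time-max}). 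A strong law of large numbers for martingales then yields $\overline R_n=\lim_{K}\frac1K\sum_k G_n(p_n[k],m_n[k])$ and $\overline T_n=\lim_K\frac1K\sum_k\big(F_n(p_n[k],m_n[k])+T^{\text{fr}}_n[k]\big)$ almost surely.

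Next I would pass to the limiting composition of the actions. For each horizon $K$, the average vector $\frac1K\sum_k\big(G_n(p_n[k],m_n[k]),\,F_n(p_n[k],m_n[k]),\,T^{\text{fr}}_n[k],\,p_n[k]\big)$ is a convex combination of points of the achievable set $\mathcal{V}_n=\{(G_n(p,m),F_n(p,m),T^{\text{fr}},p):(p,T^{\text{fr}})\in\script{P}_n,\,m\in\script{M}_n\}$, so its limit lies in the closed convex hull of $\mathcal{V}_n$. Because $\script{P}_n\times\script{M}_n$ is compact (with $G_n,F_n$ continuous), $\mathcal{V}_n$ is bounded and its convex hull is closed; hence by Carath\'eodory's theorem the limiting vector is a convex combination of finitely many points of $\mathcal{V}_n$, i.e. it is produced by weights $\{\alpha_n^h\}$ with $\alpha_n^h\geq0$, $\sum_h\alpha_n^h=1$, and points $(p_n^h,T^{\text{fr},h}_n)\in\script{P}_n$, $m_n^h\in\script{M}_n$. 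This is precisely a feasible point of (\ref{eq:opt-stat}).

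It then remains to verify that this point attains the target value. By the identities from the previous steps, $\overline R_n=\sum_h\alpha_n^hG_n(p_n^h,m_n^h)$, $\overline T_n=\sum_h\alpha_n^hF_n(p_n^h,m_n^h)+\sum_h\alpha_n^hT^{\text{fr},h}_n$, and $\overline p_n=\sum_h\alpha_n^hp_n^h$. Hence each per-site term of $\Phi$ equals $V\,\overline R_n/\overline T_n$, so $\Phi=V\sum_n\overline R_n/\overline T_n=V\,\text{Profit}_{\text{av}}$, while the constraint left-hand side equals $\sum_n\overline p_n/\overline T_n\leq B_{\text{av}}$ by feasibility (\ref{eq:budget-cond}). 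Therefore $\Phi^*\geq V\,\text{Profit}_{\text{av}}$ for every feasible strategy, and maximizing over strategies establishes the claim. The main obstacle I anticipate is the second step: rigorously passing from the \emph{adaptively} chosen, realized outcomes $R_n[k],T^{\text{ad}}_n[k]$ to their stationary conditional means via the martingale strong law, and then the measure-theoretic reduction of a general limiting action-composition to the finitely supported weights $\{\alpha_n^h\}$ demanded by the statement.
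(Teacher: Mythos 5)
Your argument is correct and is precisely the approach the paper itself invokes: the paper's proof is a one-line citation of ``an argument based on Caratheodory's theorem as in \cite{neelynowbook},'' and your proposal fills in exactly that argument (reduction of realized outcomes to conditional means via a martingale SLLN, then Carath\'eodory applied to the limiting action composition to produce the weights $\alpha_n^h$). The only point worth flagging is that compactness of the achievable set $\mathcal{V}_n$ requires continuity of $F_n$ and $G_n$, which you correctly note as an (implicit) assumption.
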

\begin{proof}
The proof can be done by using an argument based on Caratheodory's theorem as in \cite{neelynowbook}. Omitted for brevity. 
\end{proof}
In Theorem \ref{theorem:profit}, the values $\{\alpha_n^h\}_{h=0}^{\infty}$  can be viewed as the probability of choosing the investment-freeze-configuration triple $(p^{h}_n, T^{\text{fr},h}_n, m^{h}_n)$ in every frame for search site $S_n$. With this interpretation, we immediately have the following corollary. 
\begin{coro}\label{coro:opt}
There exists an optimal stationary and randomized investment strategy $\Pi_S$, which works as follows: for each search site $S_n$, after each frame, chooses an investment-freeze-configuration triple $\{(p^{h}_n, T^{\text{fr},h}_n, m^{h}_n)\}_{h=0}^{\infty}$ with probabilities $\{\alpha_{n}^{h}\}_{h=0}^{\infty}$, independent of  other sites. In particular, $\Pi_S$ achieves the followings: 
\begin{eqnarray*}
\sum_n\frac{V\expect{R_n[k]}}{\expect{T_n^{\text{ad}}[k] + T_n^{\text{fr}}[k]}}  &=& \Phi^*, \\
\sum_n\frac{\expect{p_n[k]}}{\expect{T_n^{\text{ad}}[k] + T_n^{\text{fr}}[k]  }} &\leq& B_{\text{av}}. \quad \Box
\end{eqnarray*}
\end{coro}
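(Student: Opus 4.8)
The plan is to construct $\Pi_S$ directly from an optimizer of the program in Theorem \ref{theorem:profit} and then check that the per-frame expectations this strategy produces match, term by term, the sums appearing in (\ref{eq:opt-stat}). First I would confirm that the optimum $\Phi^*$ is attained. By the Caratheodory-type reduction underlying the proof of Theorem \ref{theorem:profit}, each convex combination $\{\alpha_n^h\}$ may be taken with finite support, so the problem collapses to a finite-dimensional optimization over the compact sets $\script{P}_n$ and $\script{M}_n$ subject to the linear constraints $\sum_h\alpha_n^h=1$ and $\alpha_n^h\geq0$; continuity of $F_n$ and $G_n$ then furnishes an optimal tuple $\{\alpha_n^{h*},(p_n^{h*},T_n^{\text{fr},h*}),m_n^{h*}\}$ that achieves $\Phi^*$ and meets the budget constraint.

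Next I would define $\Pi_S$ exactly as in the statement: at the start of every frame, and independently across sites, site $S_n$ draws an index $h$ with probability $\alpha_n^{h*}$ and uses the triple $(p_n^{h*},T_n^{\text{fr},h*},m_n^{h*})$ for that frame. Since the same distribution $\{\alpha_n^{h*}\}$ is reused in every frame and the draws do not depend on the past, $\Pi_S$ is stationary and randomized, and the per-frame expectations are independent of $k$; this is what makes the quantities $\expect{R_n[k]}$, $\expect{p_n[k]}$, and $\expect{T_n^{\text{ad}}[k]+T_n^{\text{fr}}[k]}$ well-defined constants.

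The core step is a single application of the law of total expectation over the drawn index $h$. Conditioning on the event that $\Pi_S$ selects index $h$ at site $S_n$, the definitions (\ref{eq:ad-time-def}) and (\ref{eq:ad-revenue-def}) give $\expect{T_n^{\text{ad}}[k]\mid h}=F_n(p_n^{h*},m_n^{h*})$ and $\expect{R_n[k]\mid h}=G_n(p_n^{h*},m_n^{h*})$, while $p_n[k]$ and $T_n^{\text{fr}}[k]$ equal the deterministic values $p_n^{h*}$ and $T_n^{\text{fr},h*}$. Averaging against the weights $\alpha_n^{h*}$ yields
\begin{eqnarray*}
\expect{R_n[k]} &=& \sum_h\alpha_n^{h*}G_n(p_n^{h*},m_n^{h*}), \\
\expect{p_n[k]} &=& \sum_h\alpha_n^{h*}p_n^{h*}, \\
\expect{T_n^{\text{ad}}[k]+T_n^{\text{fr}}[k]} &=& \sum_h\alpha_n^{h*}\big(F_n(p_n^{h*},m_n^{h*})+T_n^{\text{fr},h*}\big).
\end{eqnarray*}
Substituting these three identities into the two claimed relations reproduces exactly the objective and the constraint of (\ref{eq:opt-stat}) evaluated at the optimizer, so the revenue ratio-sum equals $\Phi^*$ and the expenditure ratio-sum is at most $B_{\text{av}}$, which is what the corollary asserts.

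The one place I expect friction is the conditional-independence bookkeeping that justifies the law-of-total-expectation step. I must invoke the modeling assumption that, given $(p_n[k],m_n[k])$, the variables $T_n^{\text{ad}}[k]$ and $R_n[k]$ are conditionally independent of the rest of the history, so that conditioning on the selected index $h$ is the same as conditioning on the realized pair $(p_n^{h*},m_n^{h*})$ and the stated formulas for $F_n$ and $G_n$ apply verbatim. The boundedness hypotheses (\ref{eq:time-max}) and $G_n<G^{\max}$ ensure every sum converges and that expectation and summation may be interchanged, so no integrability issue arises.
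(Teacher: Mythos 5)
Your proposal is correct and follows exactly the route the paper intends: the paper gives no explicit proof, stating only that the corollary follows ``immediately'' by interpreting the $\alpha_n^h$ as per-frame selection probabilities, and your law-of-total-expectation computation together with the attainment argument is precisely the fleshed-out version of that step. No gaps.
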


However, as discussed above, due to the non-convex nature of problem (\ref{eq:opt-stat}), directly solving for such a randomized policy is very difficult. In the following, we instead develop an online algorithm, which allows the advertiser to \emph{asynchronously} update his investment profile at different search sites and achieve near-optimal profit. We will see that our algorithm only requires \emph{greedy} updates of each site and can be implemented efficiently in practice. 

\section{Online Algorithm for Achieving Near Optimal Revenue}\label{section:ai}
In this section, we develop an investment strategy to achieve a near optimal performance for the advertiser. Our algorithm is based on the renewal and asynchronous system optimization technique developed in \cite{neelymcbook} and \cite{neely-asyn}. 

\subsection{The Virtual Deficit Queue}
To start, we first define the notion of  \emph{decision points}. A time $t$ is called a decision point if it marks the beginning of an advertising interval for any site. Then, we use $\{t_d\}_{d=0}^{\infty}$ to denote the set of decision points. We see then the timeline is divided into disjoint intervals by the decision points. Fig. \ref{fig:service-idle-frame} shows five such points $\{t_0, t_1, ..., t_4\}$. Since each frame will be at least $T^{\min}$ long, we see that $\lim_{d\rightarrow\infty}t_d=\infty$ w.p $1$. 

We use $k_n(t)$ to denote the number of advertising intervals that site $n$ has completed up to time $t$. Note that $k_n(t)$ also denotes  the index of the current advertising interval the search site is in at time $t$. 
We denote $\delta_d\triangleq t_{d+1}-t_d$ the length of the interval between decision points $t_d$ and $t_{d+1}$. 
For our analysis, it is also useful to define the notions of \emph{effective budget consumption rate} $a_n[t_d]$, and \emph{effective revenue generating rate} $r_n[t_d]$ over the time interval $[t_d, t_{d+1})$ for each site $S_n$ as follows: 
\begin{eqnarray}
\hspace{-.5in}&&a_n[t_d] \triangleq \frac{p_n[k_n(t_d)]}{F_{n}[k_n(t_d)]+T^{\text{fr}}_{n}[k_n(t_d)]}, \label{eq:budget-consumption}\\
\hspace{-.5in}&&r_n[t_d] \triangleq \frac{G_{n}[k_n(t_d)]}{F_{n}[k_n(t_d)]+T^{\text{fr}}_{n}[k_n(t_d)]}. \label{eq:revenue-gen}
\end{eqnarray} 
Here $a_n[t_d]$ and $r_n[t_d]$ roughly denote the average budget consumption rate and the average revenue generation rate. Note that we use the \emph{expected} value of the advertising interval length in both (\ref{eq:budget-consumption}) and (\ref{eq:revenue-gen}). 

Now we define the \emph{deficit queue process} $Q(t_d)$, to keep track of the budget deficit due to temporal violation of the average budget constraint. $Q(t_d)$ evolves according to the following dynamics: 
\begin{eqnarray}
Q(t_{d+1}) = \max\big[ Q(t_d) -  \mu[t_d], 0  \big]  + A[t_d],\label{eq:deficit-queue}
\end{eqnarray}
with $Q(t_0)=0$.  
Here $A[t_d]$ denotes the \emph{aggregate} budget consumed by all the sites during $[t_d, t_{d+1})$, i.e., 
\begin{eqnarray}
A[t_d] =\delta_d \sum_{n}a_n[t_d]. \label{eq:A-def}
\end{eqnarray}
Similarly, $\mu[t_d]$ denotes the amount of ``allowed'' average budget consumption, i.e., 
 \begin{eqnarray}
\mu[t_d]=B_{\text{av}}(t_{d+1}-t_d)=\delta_dB_{\text{av}}. \label{eq:mu-def}
\end{eqnarray}  
We notice that since the update of $Q(t_d)$ uses expected lengths of the actual advertising intervals, $Q(t_d)$ is an \emph{approximation} of the true deficit. However, we will show later that guaranteeing the stability of $Q(t_d)$ also ensures the actual  budget constraint (\ref{eq:budget-cond}). 

For notational convenience, let us also define $c_{\max}=T^{\max}\max[\sum_n\frac{p_n^{\max}}{T_n^{\min}},  B_{\text{av}}]$. We see then $A[t_d], \mu[t_d]\leq c_{\max}$ for all $t_d$.

\vspace{-.1in}
\subsection{The Dynamic \textsf{Ad-Investment (AI)} Algorithm}
Now we present our investment algorithm. In the algorithm, $\script{S}(t_d)$ denotes the set of search engine sites that just finish a frame and will start a new advertising interval at time $t_d$. 
The algorithm works as follows: 

\underline{\textsf{Ad-Investment (AI):}} At each time $t_d$, denote $k_n=k_n(t_d)$  the frame index of $S_n$ at time $t_d$. For each search site $S_n\in\script{S}(t_d)$, observe $Q(t_d)$ and  perform the following:  
\begin{enumerate}
\item Choose $p_n[k_n]$, $m_n[k_n]$ and $T^{\text{fr}}_{n}[k_n]$ that solve the following:
\begin{eqnarray}
\hspace{-.8in}&&\max:\,\,\, \Psi_n(t_d) \triangleq\frac{VG_{n}(p_n[k_n], m_n[k_n]) - Q(t_d)p_n[k_n]}{F_{n}(p_n[k_n], m_n[k_n])+T^{\text{fr}}_{n}[k_n] }  \label{eq:ai-obj}\\
\hspace{-.8in}&&\,\,\text{s.t.} \qquad (p_n[k_n], T^{\text{fr}}_{n}[k_n])\in\script{P}_n, \, m_n[k_n]\in\script{M}_n. \nonumber
\end{eqnarray}
Then, update the investment and the advertising configuration at $S_n$ with the chosen actions. 
\item Update $Q(t_d)$ according to (\ref{eq:deficit-queue}).  $\Diamond$
\end{enumerate}
Denote $\Psi^*_n(t_d)$ the optimal value of (\ref{eq:ai-obj}). We note that since $\script{P}_n$ includes the option $p_n=0$ and $G_n(0, \cdot)=0$, at the optimal solution we always have $\Psi^*_n(t_d)\geq0$. 
We also note that the algorithm involves solving problem (\ref{eq:ai-obj}) for each search site \emph{separately}. Hence, the advertiser can easily update his configuration at any site when the current frame completes.  This is quite different from many existing algorithms that update all control actions simultaneously. 
Also note that (\ref{eq:ai-obj}) can often be solved efficiently in practice, especially when the actions sets $\script{P}_n$  and $\script{M}_n$ are finite. 
Finally, note that in every step, we use $F_n(p, m)$ and $G_n(p, m)$ for decision making. This requires the advertiser to have accurate prior estimation of the system dynamics. As we will see in Section \ref{section:approx} that \textsf{AI} can also be implemented with imperfect knowledge of the system dynamics,  and still achieve good performance. 

\subsection{Performance Analysis}
Define the Lyapunov function $L(t_d) = \frac{1}{2}Q(t_d)^2$. Then, we define a $1$-slot Lyapunov drift as follows: 
\begin{eqnarray}
\Delta(t_d) \triangleq  L(t_{d+1}) -  L(t_d) .   \label{eq:drift-def}
\end{eqnarray}
Using the queueing dynamic equation (\ref{eq:deficit-queue}), we have the following lemma. 
\begin{lemma}\label{lemma:drift}
At any time $t_d$, we have:  
\begin{eqnarray}
\Delta(t_d) \leq C_0 - Q(t_d)\big(\delta_dB_{\text{av}} - A[t_d] \big), \label{eq:drift-ineq}
\end{eqnarray} 
where $C_0\triangleq \frac{1}{2}(T^{\max})^2\big[(\sum_n\frac{p_n^{\max}}{T_n^{\min}})^2+B^2_{\text{av}}\big]$. $\Box$
\end{lemma}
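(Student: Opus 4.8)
The plan is to prove the drift bound (\ref{eq:drift-ineq}) by the standard Lyapunov argument: square the queue dynamics, drop the max-operator, and bound the squared terms by a constant.

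First I would start from the definition $\Delta(t_d) = \frac{1}{2}Q(t_{d+1})^2 - \frac{1}{2}Q(t_d)^2$ and substitute the deficit-queue dynamics (\ref{eq:deficit-queue}). The key inequality to invoke is the elementary fact that for $Q \geq 0$, $\mu \geq 0$, $A \geq 0$,
\begin{eqnarray*}
\big(\max[Q-\mu, 0] + A\big)^2 \leq Q^2 + \mu^2 + A^2 - 2Q(\mu - A).
\end{eqnarray*}
Applying this with $Q = Q(t_d)$, $\mu = \mu[t_d]$, and $A = A[t_d]$, and then dividing by $2$, gives
\begin{eqnarray*}
\Delta(t_d) \leq \frac{1}{2}\big(\mu[t_d]^2 + A[t_d]^2\big) - Q(t_d)\big(\mu[t_d] - A[t_d]\big).
\end{eqnarray*}

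Next I would substitute $\mu[t_d] = \delta_d B_{\text{av}}$ from (\ref{eq:mu-def}), which immediately turns the linear term into $-Q(t_d)(\delta_d B_{\text{av}} - A[t_d])$, exactly matching the right-hand side of (\ref{eq:drift-ineq}). It then remains to bound the constant term $\frac{1}{2}(\mu[t_d]^2 + A[t_d]^2)$ by $C_0$. Using $\delta_d \leq T^{\max}$ (each interval between decision points is at most one frame length, bounded by $T^{\max}$ via (\ref{eq:time-max})) we get $\mu[t_d]^2 = \delta_d^2 B_{\text{av}}^2 \leq (T^{\max})^2 B_{\text{av}}^2$. For the consumption term, I would bound $A[t_d] = \delta_d \sum_n a_n[t_d]$ by noting from (\ref{eq:budget-consumption}) that $a_n[t_d] = p_n[k_n]/(F_n[k_n]+T^{\text{fr}}_n[k_n]) \leq p_n^{\max}/T_n^{\min}$, so that $A[t_d]^2 \leq (T^{\max})^2 (\sum_n p_n^{\max}/T_n^{\min})^2$. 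Adding the two halves yields precisely $C_0 = \frac{1}{2}(T^{\max})^2[(\sum_n p_n^{\max}/T_n^{\min})^2 + B_{\text{av}}^2]$.

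The only subtlety, and the step I would be most careful about, is the lower bound $F_n[k_n] + T^{\text{fr}}_n[k_n] \geq T_n^{\min}$ used to control $a_n[t_d]$. This needs the frame-length assumption (\ref{eq:time-max}), but note that (\ref{eq:time-max}) bounds $T^{\text{ad}}_n[k] + T^{\text{fr}}_n[k]$ (the \emph{realized} advertising duration), whereas the rate $a_n[t_d]$ uses the \emph{expected} length $F_n[k_n] = \expect{T^{\text{ad}}_n[k]}$. Since $F_n[k_n]$ is an expectation of $T^{\text{ad}}_n[k]$ and the bound in (\ref{eq:time-max}) holds for every realization, the same lower bound $T_n^{\min} \leq F_n[k_n] + T^{\text{fr}}_n[k]$ carries over by taking expectations, so the bound on $a_n[t_d]$ is valid. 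Everything else is routine algebra; no delicate probabilistic or convexity argument is needed here since the lemma is a deterministic per-slot inequality.
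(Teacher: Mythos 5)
Your proposal is correct and follows essentially the same route as the paper's own proof in Appendix~A: square the queue update, use $(\max[x,0])^2\leq x^2$ (plus $\max[Q-\mu,0]\leq Q$ for the cross term), and bound $\frac{1}{2}(A[t_d])^2+\frac{1}{2}(\mu[t_d])^2$ by $C_0$ via $a_n[t_d]\leq p_n^{\max}/T_n^{\min}$ and $\delta_d\leq T^{\max}$. Your extra remark about why the lower bound $T_n^{\min}\leq F_n[k_n]+T^{\text{fr}}_n[k_n]$ survives the passage from realized to expected advertising durations is a valid clarification of a point the paper leaves implicit, but it does not change the argument.
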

\begin{proof}
See Appendix A. 
\end{proof}
Adding the term $-V\delta_d\sum_nr_n[t_d]$ to both sides of (\ref{eq:drift-ineq}) and rearranging terms, we obtain: 
\begin{eqnarray}
\hspace{-.4in}&&\Delta(t_d) -V \delta_d\sum_nr_n[t_d]\nonumber\\
\hspace{-.4in}&&\qquad\leq C_0 - Q(t_d)\delta_dB_{\text{av}} \nonumber \\
\hspace{-.4in}&&\qquad\qquad\,\,\,- \delta_d\sum_n \frac{VG_{n}[k_n(t_d)] - Q(t_d)p_n[k_n(t_d)]}{ F_{n}[k_n(t_d)]+T^{\text{fr}}_{n}[k_n(t_d)]} \nonumber\\
\hspace{-.4in}&&\qquad\leq C_0 - Q(t_d)\delta_dB_{\text{av}} - \delta_d\sum_n\Psi_n(t_d). \label{eq:drift-ineq-2}
\end{eqnarray} 
From (\ref{eq:drift-ineq-2}), one sees that the \textsf{AI} algorithm is indeed constructed to maximize $\Psi_n(t_d)$ on the right-hand-side (RHS) of the drift inequality for every $S_n\in\script{S}(t_d)$. 
The following lemma shows that \textsf{AI} also approximately maximizes $\sum_n\Psi_n(t_d)$. 
\begin{lemma}\label{lemma:ai-approx}
Under the \textsf{AI} algorithm, at every time $t_d$, we have:  
\begin{eqnarray}
\sum_n\Psi^\textsf{AI}_n(t_d)\geq \max_{\Pi} \bigg(\sum_n\Psi^{\Pi}_n(t_d)\bigg) - C_1. \label{eq:ai-approx}
\end{eqnarray}
Here $\Pi$ is any feasible investment policy in choosing $p_n[k_n(t_d)]$, $m_n[k_n(t_d)]$, and $T_n^{\text{fr}}[k_n(t_d)]$ at time $t_d$, and $C_1=2T^{\max}\sum_n\frac{p_n^{\max}}{T_n^{\min}}$ is a $\Theta(1)$ constant independent of $V$. 
\end{lemma}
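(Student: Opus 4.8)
The plan is to exploit the separability of $\sum_n\Psi_n(t_d)$ together with the fact that, for every site, the action currently in force was itself a maximizer of $\Psi_n$ at the (possibly earlier) instant its current frame began. Since the benchmark $\max_{\Pi}\sum_n\Psi^{\Pi}_n(t_d)$ is a sum of terms each depending only on that site's action, it equals $\sum_n\Psi_n^*(t_d)$, where $\Psi_n^*(t_d)\triangleq\max_{(p,T^{\text{fr}})\in\script{P}_n,\,m\in\script{M}_n}\Psi_n(t_d)$ is the per-site optimum evaluated with the \emph{current} queue $Q(t_d)$. Thus it suffices to bound $\sum_n\big[\Psi_n^*(t_d)-\Psi^{\textsf{AI}}_n(t_d)\big]$ site by site.

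First I would fix a convenient parametrization. Writing $g_n(\theta)\triangleq G_n(p,m)/(F_n(p,m)+T^{\text{fr}})$ and $a_n(\theta)\triangleq p/(F_n(p,m)+T^{\text{fr}})$ for an action $\theta=(p,m,T^{\text{fr}})$, the objective in (\ref{eq:ai-obj}) reads $\Psi_n(t_d)=Vg_n(\theta)-Q(t_d)a_n(\theta)$, which is affine in $Q(t_d)$ with slope $-a_n(\theta)$. By (\ref{eq:time-max}) the expected frame length obeys $F_n(p,m)+T^{\text{fr}}\geq T_n^{\min}$, so every admissible slope satisfies $0\leq a_n(\theta)\leq p_n^{\max}/T_n^{\min}$. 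Consequently the optimal-value function $Q\mapsto\Psi_n^*$ (a maximum of affine functions, hence convex) and the affine map $Q\mapsto\Psi^{\textsf{AI}}_n$ attached to the frozen action are both Lipschitz in $Q$ with constant at most $p_n^{\max}/T_n^{\min}$.

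Next comes the key structural step. Let $\tau_n\leq t_d$ be the decision point at which site $n$'s current frame $k_n(t_d)$ began, so $\tau_n=t_d$ exactly when $S_n\in\script{S}(t_d)$. Because \textsf{AI} solved (\ref{eq:ai-obj}) for $S_n$ at time $\tau_n$ using the queue value $Q(\tau_n)$, the frozen action is a maximizer of $\Psi_n$ at $Q(\tau_n)$; that is, $\Psi^{\textsf{AI}}_n$ and $\Psi_n^*$ agree when evaluated at $Q(\tau_n)$. Adding and subtracting the values at $Q(\tau_n)$ and invoking the two Lipschitz bounds yields the per-site estimate $\Psi_n^*(t_d)-\Psi^{\textsf{AI}}_n(t_d)\leq 2(p_n^{\max}/T_n^{\min})\,|Q(t_d)-Q(\tau_n)|$. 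This automatically contributes zero for the freshly updated sites $S_n\in\script{S}(t_d)$, for which $\tau_n=t_d$, so only the frozen sites matter.

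It then remains to bound the queue staleness $|Q(t_d)-Q(\tau_n)|$, which I expect to be the main obstacle. Here I would use the dynamics (\ref{eq:deficit-queue}): over one interval the queue rises by at most $A[t_d]$ and falls by at most $\mu[t_d]$, so over $[\tau_n,t_d)$ the total rise is at most $(t_d-\tau_n)\sum_n p_n^{\max}/T_n^{\min}$ and the total fall at most $(t_d-\tau_n)B_{\text{av}}$. Since $t_d-\tau_n$ lies within a single frame of $S_n$, it is bounded by $T^{\max}$ via (\ref{eq:time-max}), giving $|Q(t_d)-Q(\tau_n)|\leq c_{\max}$. Summing the per-site estimates over all $n$ produces a bound of the form $2c_{\max}\sum_n p_n^{\max}/T_n^{\min}$, which, like the stated $C_1=2T^{\max}\sum_n p_n^{\max}/T_n^{\min}$, is a $\Theta(1)$ constant independent of $V$. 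The delicate point throughout is that the $Vg_n$ terms, the only $V$-dependent quantities, cancel exactly at $Q(\tau_n)$ and therefore never enter the estimate; this cancellation is precisely what makes the bound independent of $V$ and is the crux of the argument.
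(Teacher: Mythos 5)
Your proof is correct and follows essentially the same route as the paper's: freshly updated sites are exactly optimal at $Q(t_d)$, and for each frozen site the loss is bounded by twice the $p_n^{\max}/T_n^{\min}$ Lipschitz dependence of $\Psi_n$ on the queue value times the queue drift over at most one frame length $T^{\max}$, which is $O(c_{\max})$. The only piece you elide is the reduction from randomized policies to pure actions --- the paper handles this via the identity $\frac{\theta c_1+(1-\theta)c_2}{\theta d_1+(1-\theta)d_2}=\eta\frac{c_1}{d_1}+(1-\eta)\frac{c_2}{d_2}$, which shows a mixed policy's ratio never exceeds the best pure ratio --- and your final constant $2c_{\max}\sum_n p_n^{\max}/T_n^{\min}$ differs from the stated $C_1$ only in inessential $\Theta(1)$ factors (as does the constant in the paper's own appendix).
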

\begin{proof}
See Appendix B. 
\end{proof}
Lemma \ref{lemma:ai-approx} is important for our analysis. In practice, due to the dynamics in the advertising process, sites typically do not complete their frames at the same time. Hence, at every instance of time, only some search sites need to update their configurations. In this case, if instead a synchronous scheme is applied, some sites will have to remain idle while advertising is still going on at others. This will inevitably result in resources being wasted. 
Lemma \ref{lemma:ai-approx} shows that \textsf{AI} roughly minimizes the RHS of the drift inequality (\ref{eq:drift-ineq-2}) even with asynchronous updates. This guarantees the near-optimal performance of \textsf{AI}, as  summarized in the following theorem. 
\begin{theorem}\label{theorem:ai-per} 
The \textsf{AI} algorithm achieves the following: 
\begin{enumerate}
\item[a)] The average revenue under \textsf{AI}, denoted by $\text{Profit}^{\textsf{AI}}_{\text{av}}$, satisfies: 
\begin{eqnarray}
\text{Profit}^{\textsf{AI}}_{\text{av}} \geq \text{Profit}^{*}_{\text{av}} -\frac{C_1}{V} - \frac{C_0}{VT^{\min}}. 
\end{eqnarray}
Here $C_0$ and $C_1$ are $\Theta(1)$ constants defined in Lemmas \ref{lemma:drift} and \ref{lemma:ai-approx}. 
\item[b)] At every time $t_d$, we have: 
\begin{eqnarray}
Q(t_d)\leq V\nu + 2c_{\max}. \label{eq:q-bound}
\end{eqnarray}
Moreover, the average budget constraint (\ref{eq:budget-cond}) is satisfied with probability $1$. 
\end{enumerate} 
\end{theorem}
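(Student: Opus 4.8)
**The plan is to establish the two parts of Theorem~\ref{theorem:ai-per} separately, using the drift inequality and the queue bound as the two pillars.**

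For part~b), the approach is to prove the deterministic bound \eqref{eq:q-bound} by induction on $d$, and then deduce that the average budget constraint holds from the boundedness of $Q(t_d)$. First I would argue the queue bound: the key observation is that whenever $Q(t_d)$ is large, the \textsf{AI} algorithm has no incentive to consume budget. Specifically, in the objective $\Psi_n(t_d)$ of \eqref{eq:ai-obj}, the numerator $VG_n(p_n[k_n],m_n[k_n]) - Q(t_d)p_n[k_n]$ becomes negative once $Q(t_d)p_n[k_n] > VG_n \ge 0$; using the property $G_n(p,m)\le \nu p$, this happens as soon as $Q(t_d) > V\nu$. Since $p_n=0$ always yields $\Psi_n = 0$, whenever $Q(t_d) > V\nu$ the optimizer selects $p_n[k_n]=0$ for every updating site, so $A[t_d]=0$ and the queue cannot increase on that step; combined with the fact that each increment $A[t_d]\le c_{\max}$, an induction gives $Q(t_d)\le V\nu + 2c_{\max}$. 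The extra $2c_{\max}$ (rather than $c_{\max}$) accounts for the possibility that $Q(t_d)$ sits just below $V\nu$, increments by up to $c_{\max}$, and the asynchronous bookkeeping; I would pin down the exact slack by tracking one update carefully. Given this uniform bound, dividing \eqref{eq:deficit-queue} telescoped over $d=0,\dots,D-1$ by $t_D$ and taking $D\to\infty$ forces $\limsup \frac{1}{t_D}\sum_d(A[t_d]-\delta_d B_{\text{av}})\le 0$, which after replacing the expected-length ratios $a_n[t_d]$ by the actual realized quantities (justified by a renewal-reward / strong-law argument as in \cite{neelymcbook}) yields \eqref{eq:budget-cond} with probability~$1$.

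For part~a), the plan is the standard drift-plus-penalty comparison. Starting from \eqref{eq:drift-ineq-2}, I would invoke Lemma~\ref{lemma:ai-approx} to lower-bound $\sum_n\Psi_n^{\textsf{AI}}(t_d)$ by $\sum_n\Psi_n^{\Pi}(t_d) - C_1$ for the optimal stationary randomized policy $\Pi = \Pi_S$ from Corollary~\ref{coro:opt}. This gives
\begin{eqnarray}
\Delta(t_d) - V\delta_d\sum_n r_n^{\textsf{AI}}[t_d] \le C_0 + \delta_d C_1 - Q(t_d)\delta_d B_{\text{av}} - \delta_d\sum_n \Psi_n^{\Pi_S}(t_d). \nonumber
\end{eqnarray}
Taking expectations, summing over $d=0,\dots,D-1$, using $L(t_0)=0$ and $L(t_D)\ge0$ to drop the telescoped drift, and then dividing by $V\,t_D = V\sum_d \delta_d$, I would let $D\to\infty$. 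Since $\Pi_S$ achieves $\sum_n \Psi_n^{\Pi_S} \ge \Phi^* - Q(t_d)\delta_d B_{\text{av}}$-type slack and meets the budget constraint so that its budget term cancels the $-Q(t_d)\delta_d B_{\text{av}}$ contribution (here the bound $Q(t_d)\le V\nu+2c_{\max}$ from part~b) controls the residual), the normalized penalty term converges to $\text{Profit}_{\text{av}}^{\textsf{AI}}$ and the $\Phi^*$ term delivers $\text{Profit}_{\text{av}}^*$ via Theorem~\ref{theorem:profit} after dividing by $V$. The $C_1/V$ term comes directly from the $\delta_d C_1$ summand divided by $V t_D$ (using $\delta_d \le T^{\max}$ and $t_D \ge D T^{\min}$), and the $C_0/(VT^{\min})$ term comes from $\sum_d C_0 / (V t_D) \le C_0/(V T^{\min})$.

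The main obstacle I anticipate is \textbf{the passage from the expected-length deficit queue to the actual realized time averages}. The algorithm and the drift analysis are carried out using $F_n[k]+T^{\text{fr}}_n[k] = \expect{T^{\text{ad}}_n[k]} + T^{\text{fr}}_n[k]$ in the denominators of $a_n$ and $r_n$, whereas both the true revenue \eqref{eq:revenue} and the true budget constraint \eqref{eq:budget-cond} are ratios of \emph{actual} realized sums $\sum_k R_n[k]$ over $\sum_k(T^{\text{ad}}_n[k]+T^{\text{fr}}_n[k])$. Bridging these requires a renewal-reward argument: conditioning on the chosen action $(p_n[k],m_n[k])$, the realized $T^{\text{ad}}_n[k]$ and $R_n[k]$ have the prescribed conditional means, and by the strong law (using the uniform bounds \eqref{eq:time-max}, $R^{\max}$, and the fact that frames are i.i.d.\ given actions) the realized per-site time averages coincide almost surely with the expected-rate expressions. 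I would need to handle the sum-of-ratios structure carefully, applying the strong law site by site to numerator and denominator and then invoking the continuous-mapping/ratio limit, so that the limits in \eqref{eq:revenue} and \eqref{eq:budget-cond} exist and equal their drift-analysis surrogates with probability~$1$. This is where the bulk of the technical bookkeeping lives, though each individual step is routine given the boundedness assumptions.
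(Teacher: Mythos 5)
Your proposal follows the paper's proof essentially step for step: part~(a) by drift-plus-penalty, invoking Lemma~\ref{lemma:ai-approx} to substitute the stationary policy $\Pi_S$ of Corollary~\ref{coro:opt} into the right-hand side of (\ref{eq:drift-ineq-2}), telescoping, and then bridging from the expected-length surrogates $r_n[t_d]$ to the realized time averages via the strong law applied to the fraction-of-time variables; part~(b) by noting that $G_n(p,m)\leq\nu p$ forces $\Psi_n(t_d)<0$ for any $p_n>0$ once $Q(t_d)>V\nu$, so updating sites choose $p_n=0$. The one spot you flag yourself --- that $A[t_d]$ need \emph{not} vanish when $Q(t_d)>V\nu$, since non-updating sites in mid-frame keep consuming budget --- is resolved in the paper exactly as you anticipate: that residual growth phase lasts at most $T^{\max}$ and contributes at most $c_{\max}$, which is where the $2c_{\max}$ slack comes from.
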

\begin{proof}
See Appendix C. 
\end{proof} 
By tuning the value of $V$, Theorem \ref{theorem:ai-per} implies that \textsf{AI} can indeed achieve a revenue that is within $O(1/V)$ of the maximum, for any $V\geq1$, with a tradeoff that $Q(t_d)=O(V)$. Also notice that (\ref{eq:q-bound}) does not automatically ensure that the actual deficit is deterministically bounded. This is because the actual frame length is a random variable while $Q(t_d)$ is updated with the expected values. Despite this approximation, Part b) of Theorem \ref{theorem:ai-per}  shows that so long as $Q(t_d)$ is stable,  (\ref{eq:budget-cond}) will be satisfied with probability $1$. 


\section{Performance under Imperfect Estimation}\label{section:approx}
So far, we have assumed that the advertiser knows exactly the expected revenue that will be generated during a frame and the expected duration of the frame. 
In this section, we consider the more general case where the advertiser only has imperfect estimations of the functions $G_n$ and $F_n$, and quantify the impact of such inaccuracy on  the performance of the \textsf{AI} algorithm. 

Specifically, we assume that now the advertiser does not have the perfect knowledge of $G_n(p, m)$ and $F_n(p, m)$. Instead, he uses his own estimated  functions 
$\hat{G}_n(p, m)$ and $\hat{F}_n(p, m)$ for algorithm design and decision making. That is, $Q(t_d)$ and $\Psi_n(t_d)$ will be defined with $\hat{G}_n(p, m)$ and $\hat{F}_n(p, m)$, and (\ref{eq:ai-obj}) will be solved with them as well. 
In order to quantify the impact of the estimation errors, we first introduce the quality index of the estimations, $\rho_G$ and $\rho_F$. That is, we say that the advertiser has an estimation quality $(\rho_G, \rho_F)$ if for all $S_n$, 
\begin{eqnarray}
|\hat{G}_n(p, m) - G_n(p, m)|&\leq& \rho_G G_n(p, m), \,\,\forall\, p, m,\label{eq:g-est}\\
|\hat{F}_n(p, m) - F_n(p, m)|&\leq& \rho_F F_n(p, m), \,\,\forall\, p, m. \label{eq:f-est}
\end{eqnarray}
Thus, $\rho_F$ and $\rho_G$ capture the inaccuracy levels of the estimations. 
We now have the following theorem regarding the performance of the \textsf{AI} algorithm in this case. 
\begin{theorem}\label{theorem:error}
Suppose the advertiser has an estimation quality $(\rho_G, \rho_F)$ where $\rho_G, \rho_F\in[0, 1)$. Then, the \textsf{AI} algorithm achieves the following: 
\begin{enumerate}
\item[a)] Revenue performance: 
\begin{eqnarray}
\hspace{-.3in}\text{Profit}^{\textsf{AI}}_{\text{av}} \geq\frac{1-\rho_F}{(1+\rho_F)(1+\rho_G)}\text{Profit}^{*}_{\text{av}}- \Theta(\frac{1}{V}) - \frac{N\rho_GG^{\max}}{(1+\rho_G)T^{\min}}. \nonumber
\end{eqnarray}
\item[b)] Budget control: At every time $t_d$, we have: 
\begin{eqnarray}
\hspace{-.3in}Q(t_d)\leq V(1+\rho_G)\nu + 2\hat{c}_{\max}. \label{eq:q-bound2}
\end{eqnarray}
Here $\hat{c}_{\max}$ is defined as the maximum change of $Q(t_d)$ defined with $\hat{G}_n(p, m)$ and $\hat{F}_n(p, m)$, i.e.,  
\begin{eqnarray*}
\hspace{-.3in}\hat{c}_{\max}=(1+\rho_F)T^{\max}\max[\sum_n\frac{p_n^{\max}}{(1-\rho_F)T_n^{\min}}, B_{\text{av}}]. 
\end{eqnarray*}
Moreover, the average budget constraint (\ref{eq:budget-cond}) is satisfied with probability $1$ if we replace $B_{\text{av}}$ with $(1+\rho_F)B_{\text{av}}$.  
\end{enumerate} 
\end{theorem}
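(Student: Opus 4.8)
The plan is to recognize that running \textsf{AI} with the estimated functions $\hat{G}_n,\hat{F}_n$ is \emph{identical} to running \textsf{AI} on a fictitious ``estimated system'' whose true primitives happen to be $\hat{G}_n$ and $\hat{F}_n$. Consequently, Lemmas \ref{lemma:drift} and \ref{lemma:ai-approx} and Theorem \ref{theorem:ai-per} apply verbatim to this estimated system once its structural constants are re-expressed through (\ref{eq:g-est})--(\ref{eq:f-est}). Since $\hat{G}_n(p,m)\leq(1+\rho_G)G_n(p,m)\leq(1+\rho_G)\nu p$, the estimated system has growth constant $\hat{\nu}=(1+\rho_G)\nu$; and since $(1-\rho_F)(F_n+T^{\text{fr}}_n)\leq \hat{F}_n+T^{\text{fr}}_n\leq(1+\rho_F)(F_n+T^{\text{fr}}_n)$, its frame bounds become $\hat{T}_n^{\min}=(1-\rho_F)T_n^{\min}$ and $\hat{T}^{\max}=(1+\rho_F)T^{\max}$. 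Feeding these into Theorem \ref{theorem:ai-per}(b) reproduces (\ref{eq:q-bound}) with $\nu\mapsto\hat{\nu}$ and $c_{\max}\mapsto\hat{c}_{\max}$, which is exactly (\ref{eq:q-bound2}), disposing of the first display in part b).

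For the remaining budget claim in part b), I would argue in two moves. First, the deterministic bound (\ref{eq:q-bound2}) makes $Q(t_d)$ stable, so telescoping the queue dynamics (\ref{eq:deficit-queue}) and using the uniform lower bound $\hat{F}_n+T^{\text{fr}}_n\geq(1-\rho_F)T^{\min}>0$ on the interval lengths forces the time-average \emph{estimated} consumption rate $\sum_n\hat{a}_n$ to be at most $B_{\text{av}}$ with probability $1$; this is precisely the Theorem \ref{theorem:ai-per}(b) argument transported to the estimated system. Second, I would convert the estimated rate back to the actual one frame by frame: since $\hat{F}_n\leq(1+\rho_F)F_n$ gives $a_n^{\text{true}}=p_n/(F_n+T^{\text{fr}}_n)\leq(1+\rho_F)\hat{a}_n$, time-averaging yields an actual expenditure rate of at most $(1+\rho_F)B_{\text{av}}$. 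The final passage from these expected-length rates to the realized averages in (\ref{eq:budget-cond}) uses the renewal law of large numbers exactly as in the remark following Theorem \ref{theorem:ai-per}.

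For the revenue bound in part a), I would chain two multiplicative comparisons around Theorem \ref{theorem:ai-per} applied to the estimated system. On the achieved side, the \emph{same} action sequence produced by \textsf{AI} satisfies, per frame, $\frac{G_n}{F_n+T^{\text{fr}}_n}\geq\frac{1-\rho_F}{1+\rho_G}\cdot\frac{\hat{G}_n}{\hat{F}_n+T^{\text{fr}}_n}$, because $G_n\geq\hat{G}_n/(1+\rho_G)$ and $F_n+T^{\text{fr}}_n\leq(\hat{F}_n+T^{\text{fr}}_n)/(1-\rho_F)$; summing over $n$ and time-averaging gives $\text{Profit}^{\textsf{AI}}_{\text{av}}\geq\frac{1-\rho_F}{1+\rho_G}\widehat{\text{Profit}}^{\textsf{AI}}_{\text{av}}$, where the hat denotes the estimated metric. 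On the benchmark side, Theorem \ref{theorem:ai-per}(a) for the estimated system gives $\widehat{\text{Profit}}^{\textsf{AI}}_{\text{av}}\geq\widehat{\text{Profit}}^{*}_{\text{av}}-\Theta(1/V)$, and I would lower-bound the estimated optimum by evaluating the true optimal stationary policy $\Pi_S$ of Corollary \ref{coro:opt} under $\hat{G}_n,\hat{F}_n$ (modulo the feasibility caveat below): using $\hat{G}_n\geq G_n-\rho_G G^{\max}$, $\hat{F}_n+T^{\text{fr}}_n\leq(1+\rho_F)(F_n+T^{\text{fr}}_n)$, and $F_n+T^{\text{fr}}_n\geq T^{\min}$ yields $\hat{r}_n^{\Pi_S}\geq\frac{1}{1+\rho_F}r_n^{\Pi_S}-\frac{\rho_G G^{\max}}{(1+\rho_F)T^{\min}}$, hence $\widehat{\text{Profit}}^{*}_{\text{av}}\geq\frac{1}{1+\rho_F}\text{Profit}^{*}_{\text{av}}-\frac{N\rho_G G^{\max}}{(1+\rho_F)T^{\min}}$. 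Multiplying the two comparisons and absorbing $\frac{1-\rho_F}{1+\rho_F}\leq1$ into the (looser) stated additive constant produces the claimed inequality.

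The step I expect to be the genuine obstacle is the benchmark comparison: the true optimal policy $\Pi_S$ need \emph{not} be feasible for the estimated budget constraint, since $\hat{a}_n$ can overstate the true rate and push $\Pi_S$'s estimated expenditure up to $B_{\text{av}}/(1-\rho_F)>B_{\text{av}}$. A naive drift comparison against $\Pi_S$ would then leave an uncontrolled residual of order $Q(t_d)\cdot\rho_F B_{\text{av}}/(1-\rho_F)=O(V)$, which does not vanish after dividing by $V$ and whose dependence on $\rho_F,\nu,B_{\text{av}}$ does not appear in the stated bound. The careful argument must therefore either restrict attention to the estimated-feasible sub-policy induced by $\Pi_S$ (mixing in the $p_n=0$ freeze option so that the estimated expenditure drops back to $B_{\text{av}}$) while tracking the revenue it sacrifices, or run the entire comparison inside the estimated metric through $\widehat{\text{Profit}}^{*}_{\text{av}}$, so that feasibility holds by construction and the only error incurred is the clean translation above. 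Keeping this residual out of the revenue gap, so that the degradation stays exactly the product $\frac{1-\rho_F}{(1+\rho_F)(1+\rho_G)}$ plus the $\rho_G$-additive term, is where the real care is needed; everything else is bookkeeping through the error bounds.
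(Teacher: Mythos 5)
Your part b) and the outer bracketing of part a) follow the paper's Appendix D essentially step for step: the re-expressed constants $\hat{\nu}=(1+\rho_G)\nu$ and $\hat{c}_{\max}$ for the queue bound, the frame-by-frame conversion $a_n^{\text{true}}\leq(1+\rho_F)\hat{a}_n$ for the relaxed budget guarantee, and the achieved-side comparison $\text{Profit}^{\textsf{AI}}_{\text{av}}\geq\frac{1-\rho_F}{1+\rho_G}\widehat{\text{Profit}}^{\textsf{AI}}_{\text{av}}$ are all exactly what the paper does. The gap is the one you flag yourself and never close: the benchmark inequality $\widehat{\text{Profit}}^{*}_{\text{av}}\geq\frac{1}{1+\rho_F}\text{Profit}^{*}_{\text{av}}-\frac{N\rho_G G^{\max}}{(1+\rho_F)T^{\min}}$ is justified by evaluating $\Pi_S$ under $\hat{G}_n,\hat{F}_n$, but $\Pi_S$ need not satisfy the estimated budget constraint (its estimated expenditure can reach $B_{\text{av}}/(1-\rho_F)>B_{\text{av}}$), so it does not witness any lower bound on $\widehat{\text{Profit}}^{*}_{\text{av}}$. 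Neither of your proposed repairs is carried out, and the first (mixing in the freeze option to restore estimated feasibility) would cost an additional multiplicative factor of roughly $1-\rho_F$ in revenue, degrading the constant below the stated $\frac{1-\rho_F}{(1+\rho_F)(1+\rho_G)}$.

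The paper avoids this obstacle by never introducing $\widehat{\text{Profit}}^{*}_{\text{av}}$ at all. In Appendix D it works inside the drift inequality for $\hat{Q}(t_d)$ and compares the algorithm's $\hat{\Psi}_n(t_d)$ (built from $\hat{G}_n,\hat{F}_n$) against a $\Psi^*_n(t_d)$ built from the \emph{true} $G_n,F_n$ but the estimated queue $\hat{Q}(t_d)$, establishing $\hat{\Psi}_n(t_d)\geq\frac{1}{1+\rho_F}\Psi^*_n(t_d)-\frac{V\rho_G G^{\max}+2T^{\max}\hat{c}_{\max}p^{\max}}{(1-\rho_F)T^{\min}}$, where the factor $\frac{1}{1+\rho_F}$ comes from $\Psi^*_n(t_d)\geq0$ and the denominator perturbation. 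Plugging $\Pi_S$ into $\Psi^*_n$ then uses only its feasibility for the \emph{true} constraint; the leftover queue-weighted budget term $-\hat{Q}(t_d)\delta_d B_{\text{av}}\big(1-\frac{1}{1+\rho_F}\big)$ is nonpositive and is simply discarded, so the $O(V)$ residual you worry about never arises. To complete your argument you would need to reproduce this direct comparison (or an equivalent device) rather than routing the benchmark through the optimum of the estimated system.
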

\begin{proof}
See Appendix D. 
\end{proof}
We note Part b) of Theorem \ref{theorem:error} indicates that  the   budget constraint may be slightly violated if the estimations of $G_n$ and $F_n$ are inaccurate. While such violation is typical in this case, Theorem \ref{theorem:error} shows that as long as one uses a  scaled budget in the algorithm, i.e., use $B_{\text{av}}/(1+\rho_F)$, the original advertisement budget can easily be guaranteed. 
This thus provides useful guidelines for the actual algorithm implementation.  
Also note that Theorem \ref{theorem:error} applies to more general cases where the system's dynamics and  the advertiser's estimation quality both vary over time. 
In this case, if the advertiser's estimation quality stays better than a certain quality $(\rho_G, \rho_F)$ after some finite learning time, then one can invoke Theorem \ref{theorem:error} and conclude the performance of the algorithm. 

\section{Simulation}\label{section:sim}
In this section, we provide simulation results for our algorithm. We consider the case when the system has two sites $\{S_1, S_2\}$. For each site $S_n$, we assume for simplicity that $m_n$ only denotes the maximum pay-per-click. 

The $G_n(p, m)$ and $F_n(p, m)$ functions are given by: 
\begin{eqnarray}
F_n(p, m) = \frac{\kappa_np}{m}, \quad G_n(p, m) = \gamma_n\sqrt{\frac{p}{m}}m^{q}. 
\end{eqnarray}
The function $F_n$ is chosen to model the fact that the expected duration of the advertising interval is proportional to $\frac{p}{m}$, i.e., roughly the number of times the advertiser wins the ad opportunity. 
The choice of function $G_n$ assumes that whenever an ad opportunity arises, the advertiser wins the opportunity with $m^q$ success probability, and that it roughly takes $p/m$ ad displays to generate $\sqrt{p/m}$ revenue. 
For both $S_n$, we assume that $m_n=\{0.1, 0.2\}$ and that the feasible (investment-freeze) set is given by: 
\begin{eqnarray*}
\hspace{-.2in}&&\script{P}_n=\{(p=0, T=5), (p=5, T=0), (p=5, T=5),\\
\hspace{-.2in}&&\qquad\qquad\qquad\qquad \qquad (p=10, T=0), (p=10, T=5)\}. 
\end{eqnarray*}
For each action, we assume that the actual revenue is uniformly distributed in $[0.8G_n, 1.2G_n]$. Also, the advertising interval length is uniformly distributed in $[0.8F_n, 1.2F_n]$. 
We set $\kappa_1=1$ and $\kappa_2=2$; $\gamma_1=1$ and $\gamma_2=2$;  $B_{\text{av}}=0.2$ and we use $q=0.2$. We simulation the algorithm for $V\in\{5, 10, 20, 50, 100, 200\}$. 
\begin{figure}[cht]
\centering
\vspace{-.08in}
\includegraphics[height=1.8in, width=3.4in]{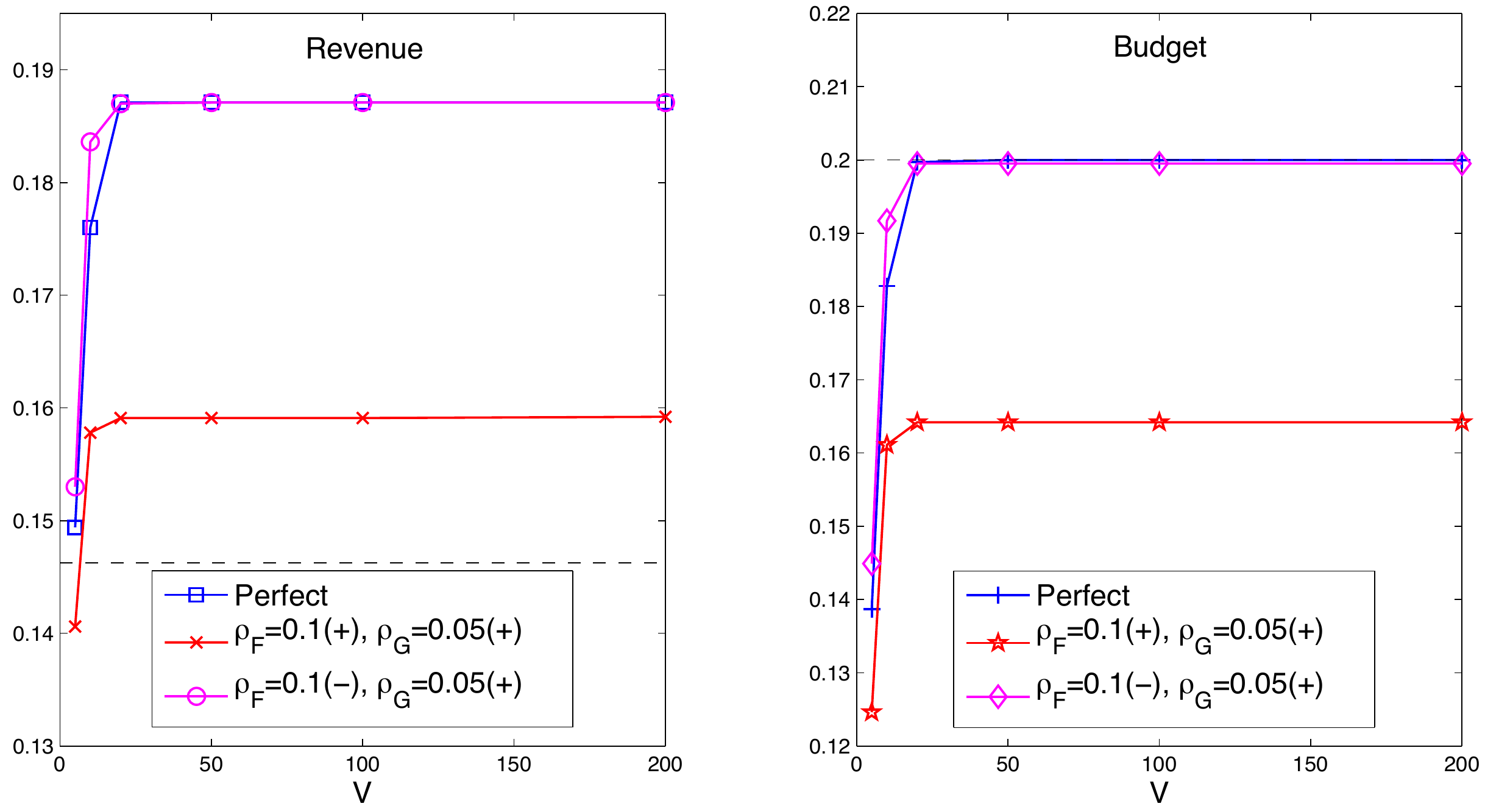}
\vspace{-.05in}
\caption{Performance of the \textsf{AI} algorithm. The legend $\rho_F=0.1(\pm), \rho_G=0.05(\pm)$ means that $\hat{F_n}(p, m)=(1\pm0.1)F_n(p, m)$ and $\hat{G_n}(p, m)=(1\pm0.05)G_n(p, m)$. 
We see that when there is estimation error, using a scaled budget value easily guarantees the original advertising budget constraint. }\label{fig:performance-plot}
\vspace{-.05in}
\end{figure}

First, we see from Fig. \ref{fig:performance-plot} that as  the $V$ value increases, the average revenue quickly converges to the optimal when there is no estimation errors. In the imperfect estimation case, we simulate \textsf{AI} with a scaled budget $B_{\text{av}}/(1+\rho_F)$. In this case, we see from the right plot that the actual budget never exceeds $B_{\text{av}}=0.2$. Also, the left plot shows that the revenue performance of \textsf{AI} is always above $\frac{1-\rho_F}{(1+\rho_F)(1+\rho_G)}\text{Profit}^*_{\text{av}}$ (indicated by the black dotted line), which implies that it is also always no less than $\frac{1-\rho_F}{(1+\rho_F)(1+\rho_G)}$ fraction of the optimal revenue with budget constraint $B_{\text{av}}/(1+\rho_F)$. These results are consistent with Theorems \ref{theorem:ai-per} and \ref{theorem:error}. 

\begin{figure}[cht]
\centering
\includegraphics[height=1.8in, width=3.4in]{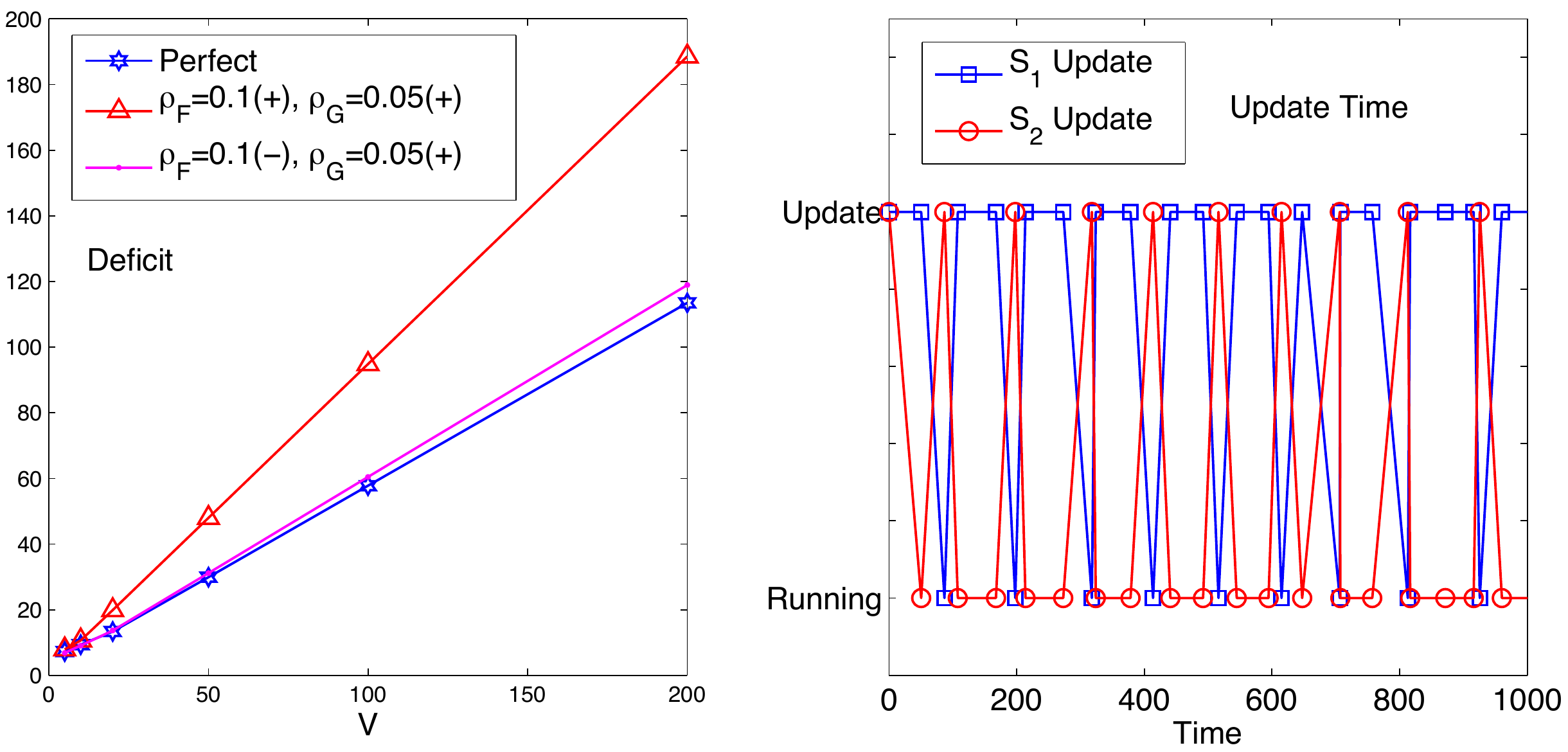}
\vspace{-.05in}
\caption{The left plot shows that the average deficit under the \textsf{AI} algorithm increases linearly in the $V$ parameter. The right plot shows that the advertiser updates the advertisement and configuration of the sites asynchronously. }\label{fig:performance-plot2}
\vspace{-.15in}
\end{figure}

To evaluate the effect of $V$ on budget violation, Fig. \ref{fig:performance-plot2} also plots the average deficit queue size. We see that it increases linearly in $V$. The right plot of Fig. \ref{fig:performance-plot2}  also shows the time when the two sites update their configurations, for the case when $V=20$ and there is no estimation error. It can be seen that under \textsf{AI}, the site updates take place \emph{asynchronously}. 

\section{Conclusion}\label{section:conclusion}
In this paper, we design optimal investment strategies for a single advertiser who utilizes multiple sponsored search engine sites simultaneously for advertising. An \emph{asynchronous} investment strategy \textsf{Ad-Investment} (\textsf{AI}) is proposed. The algorithm allows the advertiser to \emph{asynchronously} update his investment configuration at each site. We show that the algorithm achieves an average revenue that is within $O(\epsilon)$ of the optimal, for any $\epsilon>0$, while guaranteeing that the temporal budget violation is $O(1/\epsilon)$. We also analyze the performance of \textsf{AI} under imperfect system estimation and show that it is robust against estimation errors.

\vspace{-.06in}
\section*{Appendix A -- Proof of Lemma \ref{lemma:drift}}
Here we prove Lemma \ref{lemma:drift}. 
\begin{proof}
Squaring both sides of (\ref{eq:deficit-queue}) and using the fact that $(\max[x, 0])^2\leq x^2$ for all $x\in\mathbb{R}$, we have: 
\begin{eqnarray}
\hspace{-.3in}&&Q^2(t_{d+1})\leq Q^2(t_d) + (\mu[t_d])^2 + (A[t_d])^2\nonumber \\
\hspace{-.3in}&&\qquad\qquad\qquad\qquad\,\,- 2 Q(t_d) \big( \mu[t_d] - A[t_d]  \big). \label{eq:drift-0}
\end{eqnarray}
Now using the facts that $a_n[t_d]\leq p_n^{\max}/T_n^{\min}$ and $\delta_d\leq T^{\max}$, we see that at any time $t_d$,  
\begin{eqnarray}
\hspace{-.3in}&&\frac{1}{2}(A[t_d])^2 + \frac{1}{2}(\mu[t_d])^2 \nonumber\\
\hspace{-.3in}&&\qquad\quad\leq C_0 \triangleq \frac{1}{2}(T^{\max})^2\big[(\sum_n\frac{p_n^{\max}}{T_n^{\min}})^2+B^2_{\text{av}}\big]. \label{eq:c0}
\end{eqnarray}
Multiplying both sides of (\ref{eq:drift-0}) by $\frac{1}{2}$ and using the definition of $C_0$, we sees that the lemma follows. 
\end{proof}

\vspace{-.06in}
\section*{Appendix B -- Proof of Lemma \ref{lemma:ai-approx}}
\begin{proof}
To prove the lemma, we first recall that the $\Psi_n(t_d)$ function is  defined as:
\begin{eqnarray}
\Psi_n(t_d) \triangleq\frac{VG_{n}(p_n[k_n], m_n[k_n]) - Q(t_d)p_n[k_n]}{F_{n}(p_n[k_n], m_n[k_n])+T^{\text{fr}}_{n}[k_n] }
\end{eqnarray}
Here $k_n=k_n[t_d]$. 

If at time $t_d$, $S_n\in\script{S}(t_d)$, then $\Psi_n(t_d)$ is maximized over all possible pure strategies, i.e., strategies that choose a given $(p_n[k], m_n[k], T^{\text{fr}}_{n}[k])$ with probability $1$. Now notice that for any $c_1, d_1, c_2, d_2$ where $d_1, d_2\neq0$ and $\theta\in[0, 1]$, we have: 
\begin{eqnarray}
\frac{\theta c_1+(1-\theta)c_2}{\theta d_1+(1-\theta)d_2} = \eta \frac{c_1}{d_1} + (1-\eta) \frac{c_2}{d_2}, 
\end{eqnarray}
for some $\eta\in[0, 1]$ \cite{boydconvexopt}. Therefore, if a policy $\Pi$ mixes two pure strategies $\Pi_1$ and $\Pi_2$, i.e., uses two triples $(p_1, m_1, T_1)$ and $(p_2, m_2, T_2)$ with probabilities $a_1>0$ and $a_2>0$, where $a_1+a_2=1$, then there exists $\eta_\Pi\in[0, 1]$ that: 
\begin{eqnarray}
\Psi_n^{\Pi}(t_d) = \eta_{\Pi}\Psi_n^1(t_d) + (1-\eta_{\Pi})\Psi_n^2(t_d)\leq \Psi^{\textsf{AI}}_n(t_d). \label{eq:mixed}
\end{eqnarray} 
By induction, we see that $\Psi_n(t_d)$ is maximized under \textsf{AI} over all strategies for all $S_n\in\script{S}(t_d)$. 

It remains to show that for any other $S_n\notin\script{S}(t_d)$, the function $\Psi_n(t_d)$ is also \emph{approximately} maximized. 
Consider an $S_n\notin\script{S}(t_d)$. Let $\tilde{t}_n$ be the last time the advertiser updates his configuration at $S_n$ before $t_d$, i.e., $S_n\in\script{S}(\tilde{t}_n)$ but $S_n\notin \script{S}(t'_{d})$ for all $\tilde{t}_n< t'_d\leq t_d$. Note that $t_d-\tilde{t}_n\leq T^{\max}$. 
This is because the maximum frame length of any site $S_n$ is bounded by $T^{\max}$. 
Note that at time $t_d$, the site $S_n$ is still operating under the actions chosen at time $\tilde{t}_n$, which is based on the queue size $Q(\tilde{t}_n)$. Denote the actions chosen then by $\tilde{p}_n$, $\tilde{m}_n$ and $\tilde{T}^{\text{fr}}_n$. 

Denote $p^*_n$, $m^*_n$ and $T^{\text{fr}*}_n$ the actions chosen at time $t_d$ by solving  (\ref{eq:ai-obj})  with the backlog $Q(t_d)$. Note that these actions are not actually implemented because the advertiser will not update the configuration of $S_n$ at time $t_d$. Then, we have: 
\begin{eqnarray}
&&\frac{VG_{n}(\tilde{p}_n, \tilde{m}_n) - Q(t_d)\tilde{p}_n}{F_{n}(\tilde{p}_n, \tilde{m}_n)+\tilde{T}^{\text{fr}}_n}\nonumber\\
& =  &\frac{VG_{n}(\tilde{p}_n, \tilde{m}_n) - Q(\tilde{t}_n)\tilde{p}_n}{F_{n}(\tilde{p}_n, \tilde{m}_n)+\tilde{T}^{\text{fr}}_n} - \frac{ (Q(t_d)-Q(\tilde{t}_n))\tilde{p}_n}{F_{n}(\tilde{p}_n, \tilde{m}_n)+\tilde{T}^{\text{fr}}_n}\nonumber\\
& \stackrel{(a)}{\geq} & \frac{VG_{n}(p^*_n, m^*_n) - Q(\tilde{t}_n)p^*_n}{F_{n}(p^*_n, m^*_n)+T^{\text{fr}*}_n} + \frac{ (Q(t_d)-Q(\tilde{t}_n))\tilde{p}_n}{F_{n}(\tilde{p}_n, \tilde{m}_n)+\tilde{T}^{\text{fr}}_n}\nonumber\\
&=& \frac{VG_{n}(p^*_n, m^*_n) - Q(t_d)p^*_n}{F_{n}(p^*_n, m^*_n)+T^{\text{fr}*}_n} - \frac{ (Q(t_d)-Q(\tilde{t}_n))\tilde{p}_n}{F_{n}(\tilde{p}_n, \tilde{m}_n)+\tilde{T}^{\text{fr}}_n}\nonumber\\
&& + \frac{  (Q(t_d)-Q(\tilde{t}_n))p^*_n}{F_{n}(p^*_n, m^*_n)+T^{\text{fr}*}_n}.  \label{eq:diff-0}
\end{eqnarray}
Here the inequality (a) is because $\tilde{p}_n$, $\tilde{m}_n$ and $\tilde{T}^{\text{fr}}_n$ maximize (\ref{eq:ai-obj}) given $Q(\tilde{t}_n)$. 

Using the queueing dynamic equation (\ref{eq:deficit-queue}), we see that: 
\begin{eqnarray*}
Q(t_d)-(t_d-\tilde{t}_n)c_{\max}\leq Q(\tilde{t}_n) \leq Q(t_d) +(t_d-\tilde{t}_n)c_{\max}. 
\end{eqnarray*}
Together with the fact that $t_d-\tilde{t}_n\leq T^{\max}$, we have: 
\begin{eqnarray*}
|Q(t_d) - Q(\tilde{t}_n)| \leq  T^{\max}c_{\max}. 
\end{eqnarray*}
Using this in (\ref{eq:diff-0}), we get: 
\begin{eqnarray}
\frac{VG_{n}(\tilde{p}_n, \tilde{m}_n) - Q(t_d)\tilde{p}_n}{F_{n}(\tilde{p}_n, \tilde{m}_n)+\tilde{T}^{\text{fr}}_n}  \geq  \Psi^*_n(t_d) - C_{1n}, \label{eq:ineq-pure}
\end{eqnarray}
where $ \Psi^*_n(t_d)$ is the value of $\Psi_n(t_d)$ with the actions $p^*_n$, $m^*_n$, $T^{\text{fr}*}_n$, and $Q(t_k)$, and $C_{1n} \triangleq \frac{2T^{\max}c_{\max} p_n^{\max}}{T_n^{\min}}$. This shows that even for $S_n\notin\script{S}(t_d)$, $\Psi_n(t_d)$ is approximately maximized. Now by defining $C_1=\sum_nC_{1n}$ and using a similar argument as in (\ref{eq:mixed}), we see that the lemma follows. 
\end{proof}

\vspace{-.1in}
\section*{Appendix C -- Proof of Theorem \ref{theorem:ai-per}}
Here we prove Theorem \ref{theorem:ai-per}. 
\begin{proof} (Theorem \ref{theorem:ai-per}) 
(Revenue Performance) We first prove the revenue performance. To do so, we recall the drift inequality (\ref{eq:drift-ineq-2}) as follows: 
\begin{eqnarray}
\hspace{-.4in}&&\Delta(t_d) -V \delta_d\sum_nr_n[t_d]\nonumber\\
\hspace{-.4in}&&\qquad\leq C_0 - Q(t_d)\delta_dB_{\text{av}} - \delta_d\sum_n\Psi_n(t_d). \label{eq:drift-ineq-3}
\end{eqnarray} 
By Lemma \ref{lemma:ai-approx}, we note that \textsf{AI} maximizes the term $\sum_n\Psi_n(t_d)$ over all possible investment policies for up to a finite constant $C_1$. Thus, we can plug into the RHS of (\ref{eq:drift-ineq-3}) any control algorithm and the following holds: 
\begin{eqnarray}
\hspace{-.4in}&&\Delta(t_d) -V\delta_d \sum_nr_n[t_d]\nonumber\\
\hspace{-.4in}&&\qquad\leq C_0 +C_1\delta_d - Q(t_d)\delta_dB_{\text{av}} - \delta_d\sum_n\Psi^{\text{ALT}}_n(t_d). \label{eq:drift-ineq-4}
\end{eqnarray} 
Therefore, we plug the randomized and stationary policy $\Pi_S$ in Corollary \ref{coro:opt} into (\ref{eq:drift-ineq-3}) and obtain: 
\begin{eqnarray}
\hspace{-.4in}&&\Delta(t_d) -V \delta_d\sum_nr_n[t_d] \leq C_0  +\delta_dC_1- \delta_d\Phi^*. \label{eq:drift-ineq-4}
\end{eqnarray} 
Taking expectations on both sides, carrying out a telescoping sum from $d=0$ to $D-1$, and rearranging terms, we obtain: 
\begin{eqnarray*}
\hspace{-.4in}&&V\sum_{d=0}^{D-1}\expect{\delta_d\sum_nr_n[t_d]} \nonumber\\
\hspace{-.4in}&&\qquad\qquad\geq \sum_{d=0}^{D-1}\expect{\delta_d}(\Phi^*-C_1) - DC_0 - \expect{L(t_0)}. 
\end{eqnarray*}
Dividing both sides by $V\sum_d\expect{\delta_d}$ and letting $D\rightarrow\infty$, 
\begin{eqnarray}
\hspace{-.4in}&&\lim_{D\rightarrow\infty}\frac{\sum_{d=0}^{D-1}\expect{\delta_d\sum_nr_n[t_d]}}{\sum_{d=0}^{D-1}\expect{\delta_d}}\nonumber \\
\hspace{-.4in}&&\qquad\qquad\geq  \Phi^*/V-\frac{C_1}{V} - \lim_{D\rightarrow\infty}\frac{C_0D}{V\sum_{d=0}^{D-1}\expect{\delta_d}}\nonumber\\
\hspace{-.4in}&&\qquad\qquad \geq\text{Profit}^*_{\text{av}} -\frac{C_1}{V} - \frac{C_0}{VT^{\min}}. \label{eq:ai-per}
\end{eqnarray}

It remains to show that the actual average revenue also satisfies (\ref{eq:ai-per}). 
%
%
Denote $z_n^{p, m, T}$  the long term average fraction of time that the triple $(p, m, T)$ is used at $S_n$ under \textsf{AI}. Because for any $(p, m)$, $\expect{R_n[k]}$ and $\expect{T^{\text{ad}}_n[k]}$ are bounded, we use the strong law of large numbers (SLLN) \cite{durrett_prob} to conclude:   
\begin{eqnarray} 
\hspace{-.4in}&& \sum_{n}\frac{\lim_{K\rightarrow\infty}\frac{1}{K}\sum_{k=0}^{K-1}R^{\textsf{AI}}_n[k]  }{ \lim_{K\rightarrow\infty}\frac{1}{K}\sum_{k=0}^{K-1}(T^{\text{ad}, \textsf{AI}}_n[k] + T^{\text{fr},\textsf{AI}}_n[k]  ) }\label{eq:exp2actual-0}\\
\hspace{-.4in}&& \qquad\qquad\qquad\quad=\sum_n\frac{\sum_{p, m, T}z_n^{p, m, T}G_n(p, m)}{\sum_{p, m, T}z_n^{p, m, T}[F_n(p, m) + T]}. \nonumber
\end{eqnarray}
Now fix a search site $S_n$, we see then for the $t_d$ values that are in the middle of a frame of $S_n$, $r_n[t_d]$ remains unchanged. Let $\tilde{t}_n$ be the last time $S_n$ starts a new frame before $t_D$, and let $\script{J}_n^D(p,m,T)$ be the set of frames during which the triple $(p, m, T)$ is adopted at site $S_n$ up to $t_D$. Then, 
\begin{eqnarray*} 
\hspace{-.1in}&&  \sum_{d=0}^{D-1}\expect{\delta_dr_n[t_d]}  \\
\hspace{-.1in}&=&  \sum_{p, m, T}\frac{G_n(p, m)}{F_n(p, m)+T}\expect{\sum_{j\in \script{J}_n^D(p, m, T)}(T_n^{\text{ad}}[j] + T_n^{\text{fr}}[j] )}\\
\hspace{-.1in} && + \expect{r_n[t_D] (t_D - \tilde{t}_n)}\\
\hspace{-.1in}&=&  \sum_{p, m, T}G_n(p, m)\expect{|\script{J}_n^D(p, m, T)|} \\
\hspace{-.1in} && + \expect{r_n[t_D] (t_D - \tilde{t}_n)}. 
\end{eqnarray*}
In this case, we have: 
\begin{eqnarray} 
\hspace{-.3in}&& \lim_{D\rightarrow\infty}\frac{\sum_n\sum_{d=0}^{D-1}\expect{\delta_dr_n[t_d]}}{\sum_{d=0}^{D-1}\expect{\delta_d}} \nonumber\\
\hspace{-.3in}&=&  \lim_{D\rightarrow\infty} \sum_n\sum_{p, m, T} G_n(p, m)\frac{\expect{|\script{J}_n^D(p, m, T)|}}{\sum_{d=0}^{D-1}\expect{\delta_d}}\label{eq:exp2actual}\\
\hspace{-.1in} && +  \lim_{D\rightarrow\infty}\sum_n\frac{\expect{r_n[t_D] (t_D - \tilde{t}_n)}}{\sum_{d=0}^{D-1}\expect{\delta_d}}\nonumber
\end{eqnarray}
Since $(t_D - \tilde{t}_n)\leq T_n^{\max}$, $r_n[t_D]\leq \frac{G^{\max}}{T_n^{\min}}$, and $\lim_{D\rightarrow\infty}\sum_{d=0}^{D-1}\expect{\delta_d}=\infty$, we see that the last term vanishes. 
Moreover, it can be shown that for all $S_n$: 
\begin{eqnarray*}
\hspace{-.3in}&&\lim_{D\rightarrow\infty}\frac{\expect{|\script{J}_n^D(p, m, T)|}}{\sum_{d=0}^{D-1}\expect{\delta_d}} \\
\hspace{-.3in}&=&\lim_{D\rightarrow\infty}\frac{\expect{|\script{J}_n^D(p, m, T)|}}{\sum_{p, m, T}\expect{|\script{J}_n^D(p, m, T)|}[F_n(p, m)+T]} \\
\hspace{-.3in}&=&  \frac{ z_n^{p, m, T} }{\sum_{p, m, T}z^{p, m, T}[F_n(p, m)+T]}. 
\end{eqnarray*}
Using this in  (\ref{eq:exp2actual}), we have: 
\begin{eqnarray} 
\hspace{-.3in}&& \lim_{D\rightarrow\infty}\frac{\sum_n\sum_{d=0}^{D-1}\expect{\delta_dr_n[t_d]}}{\sum_{d=0}^{D-1}\expect{\delta_d}} \nonumber\\
\hspace{-.3in}&&\qquad\qquad =   \sum_n\frac{\sum_{p, m, T}z_n^{p, m, T}G_n(p, m)}{\sum_{p, m, T}z_n^{p, m, T}[F_n(p, m) + T]}. 
\end{eqnarray}
This and (\ref{eq:exp2actual-0}) show that the average revenue satisfies: 
\begin{eqnarray}
\hspace{-.4in}&&\text{Profit}^{\textsf{AI}}_{\text{av}}\geq\text{Profit}^*_{\text{av}} -\frac{C_1}{V} - \frac{C_0}{V T^{\min}}.  
\end{eqnarray}

(Budget) We now prove that the average budget constraint (\ref{eq:budget-cond}) is ensured under the \textsf{AI} algorithm. 

We first prove (\ref{eq:q-bound}). To see that (\ref{eq:q-bound}) holds, note that $c_{\max}$ is the maximum change of $Q(t_d)$ during any interval $[t_d, t_{d+1})$. Thus, if $Q(t_d)<V\nu$ at time $k$, then we must have $Q(t_{d+1})\leq V\nu+c_{\max}$. On the other hand, suppose now $Q(t_d)>V\nu$. Without loss of generality, assume that $Q(t_d)$ is the first time $Q(t_d)>V\nu$ occurs. One see that $Q(t_{d})\leq V\nu+c_{\max}$. Also, we see that from now on until $Q(t_d)$ drops below $V\nu$ again, the advertiser will choose $p_n=0$ for any $S_n$ that requires update during this interval. This is because once $Q(t_d)>V\nu$, choosing any $p_n>0$ will result in $\Psi_n(t_d)<0$. This implies that $a_n[t_d]=0$ for all $S_n\in\script{S}(t_d)$ during this time.  Hence, the only possible increment of $Q(t_d)$ will be due to the sites that are in the middle of their frames. 
$Q(t_d)$ will also start to decrease once all the active sites complete their current frames. Now it can be seen that this increasing interval will last for at most $T^{\max}$ long, and for each active site, $a_n[t_d]\leq\frac{p^{\max}_n}{T_n^{\min}}$. Thus, the maximum possible increment after $Q(t_d)\geq V\nu$ will be $c_{\max}$. Combining this fact with $Q(t_{d})\leq V\nu+c_{\max}$, we conclude that $Q(t_{d})\leq V\nu+2c_{\max}$ for all $t_d$.

%



This shows that the deficit queue with $A[t_d]$ and $\mu[t_d]$ being the arrival and service rates is bounded. Using a similar argument as in the revenue case, one can show that:  
\begin{eqnarray} 
\hspace{-.4in}&& \sum_n\frac{\sum_{p, m, T}z_n^{p, m, T}p}{\sum_{p, m, T}z_n^{p, m, T}[F_n(p, m) + T]}\leq B_{\text{av}}. \nonumber
\end{eqnarray}
Using SLLN again, one sees  that the LHS equals the average  expenditure with probability $1$. Hence, (\ref{eq:budget-cond}) is satisfied.  
\end{proof}

\section*{Appendix D -- Proof of Theorem \ref{theorem:error}}
We prove Theorem \ref{theorem:error} here. 
\begin{proof}
(Revenue) First, we prove the revenue performance. 
Note that the $Q(t_d)$ value is now updated with $\hat{G}_n(p, m)$ and $\hat{F}_n(p, m)$. For convenience, we denote this queueing process by $\hat{Q}(t_d)$, and denote the effective revenue generation rate by $\hat{r}_n[t_d]$. 
Then, using the queueing dynamic equation (\ref{eq:deficit-queue}), we can similarly obtain: 
\begin{eqnarray}
\hspace{-.4in}&&\Delta(t_d) -V \delta_d\sum_n\hat{r}_n[t_d]\nonumber\\
\hspace{-.4in}&&\qquad\leq C_2 - \hat{Q}(t_d)\delta_dB_{\text{av}} \nonumber \\
\hspace{-.4in}&&\qquad\qquad\,\,\,- \delta_d\sum_n \frac{V\hat{G}_{n}[k_n(t_d)] - \hat{Q}(t_d)p_n[k_n(t_d)]}{\hat{F}_{n}[k_n(t_d)]+T^{\text{fr}}_{n}[k_n(t_d)]} \nonumber\\
\hspace{-.4in}&&\qquad\leq C_2 - \hat{Q}(t_d)\delta_dB_{\text{av}} - \delta_d\sum_n\hat{\Psi}_n(t_d). \label{eq:drift-ineq-err}
\end{eqnarray} 
Here $C_2\triangleq  \frac{1}{2}((1+\rho_F)T^{\max})^2\big[(\sum_n\frac{p_n^{\max}}{(1-\rho_F)T_n^{\min}})^2+B^2_{\text{av}}\big]$ and $\hat{\Psi}_n(t_d)$ is $\Psi_n(t_d)$ with $\hat{G}_n(p, m)$ and $\hat{F}_n(p, m)$.  

Similar to the proof of Theorem \ref{theorem:ai-per}, we will show that the term $\sum_n\hat{\Psi}_n(t_d)$ is roughly maximized under \textsf{AI} with estimation errors. To do so, note that the advertiser will now choose $p_n, m_n, T_n^{\text{fr}}$ by solving (\ref{eq:ai-obj}) with  $\hat{G}_n$, $\hat{F}_n$, and  $\hat{Q}(t_d)$. 

Consider any $S_n$ and $t_d$. Let $\tilde{t}_n$ be the last time when the advertiser updates $S_n$'s configuration before $t_d$.  Denote the actions chosen at time $\tilde{t}_n$ by \textsf{AI} with $\hat{G}_n$ and $\hat{F}_n$ by $\hat{p}_n$, $\hat{m}_n$, and $\hat{T}_n^{\text{fr}}$, and denote $p^*_n$, $m^*_n$, and $T_n^{\text{fr}*}$ any alternative feasible actions. 
%
Also, denote $\sigma_{gn}(p, m)=\hat{G}_n({p}, {m})-G_n({p}, {m})$ and $\sigma_{fn}(p, m)=\hat{F}_n({p}, {m})- F_n({p}, {m})$ the estimation errors of the functions. Then, we have: 
\begin{eqnarray}
\hspace{-.3in}&&\hat{\Psi}_n(t_d) =\frac{V[G_{n}(\hat{p}_n, \hat{m}_n)+\sigma_{gn}(\hat{p}, \hat{m})]- \hat{Q}(t_d)\hat{p}_n}{[F_{n}(\hat{p}_n, \hat{m}_n) +\sigma_{fn}(\hat{p}_n, \hat{m}_n)]+\hat{T}^{\text{fr}}_n}  \nonumber\\
\hspace{-.3in}&&= \frac{V[G_{n}(\hat{p}_n, \hat{m}_n)+\sigma_{gn}(\hat{p}, \hat{m})]- \hat{Q}(\tilde{t}_n)\hat{p}_n}{[F_{n}(\hat{p}_n, \hat{m}_n) +\sigma_{fn}(\hat{p}_n, \hat{m}_n)]+\hat{T}^{\text{fr}}_n}  \nonumber\\
\hspace{-.3in}&&\qquad\qquad\qquad + \frac{(\hat{Q}(\tilde{t}_n) - \hat{Q}(t_d))\hat{p}_n}{[F_{n}(\hat{p}_n, \hat{m}_n) +\sigma_{fn}(\hat{p}_n, \hat{m}_n)]+\hat{T}^{\text{fr}}_n} \nonumber\\
\hspace{-.3in}&&\stackrel{(a)}{\geq}  \frac{VG_{n}(p^*_n, m^*_n)  - \hat{Q}(t_d)p^*_n}{[F_{n}(p^*_n, m^*_n) +\sigma_{fn}(p^*_n, m^*_n)] +T^{\text{fr}*}_n}\nonumber\\
\hspace{-.3in}&&+  \frac{V\sigma_{gn}(p^*_n, m^*_n)  + ( \hat{Q}(t_d) -  \hat{Q}(\tilde{t}_n))p^*_n }{[F_{n}(p^*_n, m^*_n) +\sigma_{fn}(p^*_n, m^*_n) ] + T^{\text{fr}*}_n} - \frac{T_{\max}\hat{c}_{\max} p^{\max}}{(1-\rho_F)T^{\min}}\nonumber\\
\hspace{-.3in}&&\geq \Psi_n^*(t_d) \frac{F_{n}(p^*_n, m^*_n) + T^{\text{fr}*}_n}{[F_{n}(p^*_n, m^*_n) +\sigma_{fn}(p^*_n, m^*_n) ] + T^{\text{fr}*}_n} \nonumber \\
\hspace{-.3in}&&\qquad\qquad\qquad-  \frac{V \rho_GG^{\max} +2T^{\max}\hat{c}_{\max}p^{\max}  }{ (1-\rho_F) T^{\min} } \nonumber\\
\hspace{-.3in}&&\stackrel{(b)}{\geq} \frac{1}{1+\rho_F}\Psi_n^*(t_d) - \frac{V\rho_GG^{\max}+2T^{\max}\hat{c}_{\max}p^{\max}}{(1-\rho_F)T^{\min}}. \label{eq:error-1}
\end{eqnarray}
Here in inequality (a), we have used:  
\[\hat{c}_{\max}=(1+\rho_F)T^{\max}\max[\sum_n\frac{p_n^{\max}}{(1-\rho_F)T_n^{\min}},  B_{\text{av}}], \]
and the fact that $|\hat{Q}((t_d) -  \hat{Q}((t^*_n)|\leq T^{\max} \hat{c}_{\max}$. 
$\Psi_n^*(t_d)$ is defined as: 
\begin{eqnarray}
\hspace{-.3in}&&\Psi_n^*(t_d)=  \frac{VG_{n}(p^*_n, m^*_n)  -  \hat{Q}(t_d)p^*_n}{ F_{n}(p^*_n, m^*_n)  +T^{\text{fr}*}_n}.\label{eq:psi-opt}
\end{eqnarray}
In inequality (b), we have used the fact that $\Psi_n^*(t_d)\geq0$. 
Plug (\ref{eq:psi-opt}) and (\ref{eq:error-1}) back into (\ref{eq:drift-ineq-err}), we obtain: 
\begin{eqnarray}
\hspace{-.35in}&&\Delta(t_d) -V \delta_d\sum_n\hat{r}_n[t_d] \leq C_2+\delta_dC_3 \label{eq:error-3}\\
\hspace{-.35in}&&- \hat{Q}(t_d)\delta_dB_{\text{av}} - \delta_d \frac{1}{1+\rho_F}\sum_n\Psi^*_n(t_d)   + \frac{\delta_dVN\rho_GG^{\max}}{(1-\rho_F)T^{\min}}. \nonumber
\end{eqnarray} 
Here $C_3\triangleq \frac{2NT^{\max}\hat{c}_{\max}p^{\max}}{(1-\rho_F)T^{\min}}=\Theta(1)$. 

Now plug the stationary and randomized policy $\Pi_S$ into (\ref{eq:error-3}), we get:  
\begin{eqnarray}
\hspace{-.3in}&&\Delta(t_d) -V \delta_d\sum_n\hat{r}_n[t_d]\nonumber\\
\hspace{-.3in}&&\leq C_2+\delta_dC_3 \nonumber\\
\hspace{-.3in}&&\qquad- \hat{Q}(t_d)\delta_dB_{\text{av}} - \delta_d \frac{1}{1+\rho_F}\sum_n\Psi^{\Pi_S}_n(t_d)+ \frac{\delta_dVN \rho_GG^{\max}}{(1-\rho_F)T^{\min}} \nonumber\\
\hspace{-.3in}&&\leq C_2+\delta_dC_3 +  \frac{\delta_dV N\rho_GG^{\max}}{(1-\rho_F)T^{\min}}\nonumber\\
\hspace{-.3in}&& \qquad- \hat{Q}(t_d)\delta_d[B_{\text{av}}-\frac{1}{1+\rho_F} B_{\text{av}}] - \delta_d \frac{1}{1+\rho_F} V\text{Profit}^*_{\text{av}} \nonumber\\
\hspace{-.3in}&&\leq C_2+\delta_dC_3 +  \frac{\delta_dVN \rho_GG^{\max}}{(1-\rho_F)T^{\min}}- \delta_d \frac{1}{1+\rho_F} V\text{Profit}^*_{\text{av}}. \nonumber
\end{eqnarray} 
Using an argument similar to the one used in the proof of Theorem \ref{theorem:ai-per}, one can show that: 
\begin{eqnarray*} 
\hspace{-.3in}&& \lim_{D\rightarrow\infty}\frac{\sum_n\sum_{d=0}^{D-1}\expect{\delta_d\hat{r}_n[t_d]}}{\sum_{d=0}^{D-1}\expect{\delta_d}} \nonumber\\
\hspace{-.3in}&=&\sum_n\frac{\sum_{m, p, T}\hat{z}_n^{p, m, T}\hat{G}_n(p, m)}{\sum_{p, m, T}\hat{z}_n^{p, m, T}[\hat{F}_n(p, m)+T]}\\
\hspace{-.3in}&\geq& \frac{1}{1+\rho_F}\text{Profit}_{\text{av}}^* - \frac{C_2+T^{\min}C_3}{VT^{\min}}-\frac{N\rho_GG^{\max}}{(1-\rho_F)T^{\min}}. 
\end{eqnarray*}
Therefore, the actual average profit satisfies: 
\begin{eqnarray*} 
\hspace{-.2in} \text{Profit}_{\text{av}}^{\textsf{AI}} &=&\sum_n\frac{\sum_{m, p, T}\hat{z}_n^{p, m, T}G_n(p, m)}{\sum_{p, m, T}\hat{z}_n^{p, m, T}[F_n(p, m)+T]}\\
\hspace{-.2in} &\stackrel{(a)}{\geq}&\sum_n\frac{(1-\rho_F)\sum_{m, p, T}\hat{z}_n^{p, m, T}\hat{G}_n(p, m)}{(1+\rho_G)\sum_{p, m, T}\hat{z}_n^{p, m, T}[\hat{F}_n(p, m)+T]}\\
\hspace{-.2in}&\geq& \frac{1-\rho_F}{(1+\rho_F)(1+\rho_G)}\text{Profit}_{\text{av}}^* \\
\hspace{-.2in}&& - \frac{(C_2+T^{\min}C_3)(1-\rho_F)}{VT^{\min}(1+\rho_G)}-\frac{N\rho_GG^{\max}}{(1+\rho_G)T^{\min}}. 
\end{eqnarray*}
Here in  inequality (a) we have used the facts that $G_n(p, m)\geq \frac{\hat{G}_n(p, m)}{1+\rho_G}$ and $F_n(p, m)\leq\frac{\hat{F}_n(p, m)}{1-\rho_F}$ for all $n$ and all $p, m$.

(Budget) We now prove the budget performance. Using a similar argument as in the proof of Theorem \ref{theorem:ai-per}, we see that $\hat{Q}(t_d)$ is bounded by $(1+\rho_G)V\nu+2\hat{c}_{\max}$. However, different from Theorem  \ref{theorem:ai-per}, this does not imply that the actual expenditure does not violate the budget constraint $B_{\text{av}}$. This is so because the estimation errors  make $\hat{Q}(t_d)$ an approximation of the actual expected  budget deficit. 

To prove our result, we  denote $\hat{z}_n^{p, m, T}$ the fraction of time the triple $(p, m, T)$ is used under the \textsf{AI} algorithm with imperfect estimations $\hat{G}_n(\cdot, \cdot)$ and $\hat{F}_n(\cdot, \cdot)$. Then, since $\hat{Q}(t_d)$ is stable, we have: 
\begin{eqnarray}
\sum_n\frac{ \sum_{p, m, T}\hat{z}^{p, m, T}_n p  }{ \sum_{p, m, T}\hat{z}^{p, m, T}_n [\hat{F}_n(p, m) + T] }\leq B_{\text{av}}. 
\end{eqnarray}
Now using the fact that: 
\begin{eqnarray}
\hspace{-.3in}&&\sum_n\frac{ \sum_{p, m, T}\hat{z}^{p, m, T}_n p  }{ \sum_{p, m, T}\hat{z}^{p, m, T}_n [(1+\rho_F)F_n(p, m) + T] } \nonumber\\
\hspace{-.3in}&&\qquad\qquad\leq\sum_n\frac{ \sum_{p, m, T}\hat{z}^{p, m, T}_n p  }{ \sum_{p, m, T}\hat{z}^{p, m, T}_n [\hat{F}_n(p, m) + T] }\nonumber, 
\end{eqnarray}
we conclude that: 
\begin{eqnarray}
\sum_n\frac{ \sum_{p, m, T}\hat{z}^{p, m, T}_n p  }{ \sum_{p, m, T}\hat{z}^{p, m, T}_n [F_n(p, m) + T] } \leq (1+\rho_F)B_{\text{av}}. \label{eq:error-exp-bound}
\end{eqnarray}
Using a similar argument as in the proof of Theorem \ref{theorem:ai-per}, it can be shown that the LHS of (\ref{eq:error-exp-bound}) corresponds to the actual average advertising expenditure. This completes the proof of the theorem. 
\end{proof}

\vspace{.05in}

\bibliographystyle{unsrt}
\bibliography{mybib}

\end{document}